\newtheorem{theorem}{Theorem}[section]
\newtheorem{proposition}[theorem]{Proposition}
\newtheorem{lemma}[theorem]{Lemma}
\newtheorem{definition}[theorem]{Definition}
\newenvironment{proof}{\paragraph{Proof:}}{\hfill$\square$}
\title{Optimal Load Balanced Demand Distribution under Overload Penalties} 
\author{Sarnath Ramnath \\St. Cloud State University, USA \and Venkata M. V. Gunturi \\IIT Ropar, India}
\date{}
\begin{document}
\maketitle
   
    \begin{abstract}
        Input to the Load Balanced Demand Distribution (LBDD) consists of the following: (a) a set of  service centers; (b) a set of demand  nodes and; (c) a cost matrix containing the cost of assignment for each (demand node, service center) pair. In addition, each service center is also associated with a notion of capacity and a penalty which is incurred if it gets overloaded. Given the input, the LBDD problem determines a mapping from the set of $n$ demand vertices to the set of $k$ service centers, $n \gg k$. The objective is to determine a mapping that minimizes the sum of the following two terms: (i) the total cost between demand units and their allotted service centers and, (ii) total penalties incurred. The problem of LBDD finds its application in a variety of applications. An instance of the LBDD problem can be reduced to an instance of the min-cost bi-partite matching problem. The best known algorithm for   min-cost matching in an unbalanced bipartite graph yields a complexity of $O(n^3k)$.   This paper proposes novel allotment subspace re-adjustment based approach   which allows us to characterize the optimality of the mapping without invoking matching or mincost flow. This approach yields an optimal solution with time complexity $O(nk^3 + nk^2 \log n)$, and also allows us to efficiently  maintain an optimal allotment under insertions and deletions. 
    \end{abstract}

\section{Introduction}
\label{intro}
The problem of Load Balanced Demand Distribution (LBDD) takes the following three as input. (a) a set $S$ of service centers (e.g., COVID clinics, schools etc.); (b) a set $D$ of demand units (e.g., people); (c) a cost matrix $\mathcal{CM}$ which contains the cost of assigning a demand unit $d_i\in D$ to a service center $s_j \in S$ (for all $<d_i,s_j>$ pairs). Additionally, each service center $s_j \in S$ is associated with a positive integer capacity and a notion of ``penalty'' which denotes the \textit{"extra cost"} that must be paid to overload the particular service center. Given the input, the LBDD problem determines an mapping between the set of demand units and the set of service centers. The objective here is to determine a mapping which minimizes the sum of the following two terms: (1) total cost of assignment (accumulated across all assignments) and, (2) total penalty incurred (if any) while overloading the service centers.

\noindent \textbf{Problem Motivation:} 
The Load Balanced Demand Distribution (LBDD) problem finds its application in the domain of urban planning. For instance, consider the task of defining the geographic zones of operation of service centers such as schools (refer \cite{Catchment_area}), COVID clinics and walk-in COVID-19 testing centers (in case of continuous monitoring of the disease at a city scale). 

The key aspect over here being that each of the previously mentioned type of service centers is associated with a general notion of capacity. This capacity dictates the number of people (demand) that can be accommodated (comfortably) each day, week or during any specific duration of time (e.g., typical duration of sickness of patients). In addition, the quality of service at any of these service centers is expected to degrade if significantly more number of people (beyond its capacity) are assigned to it. The cost of this degradation can be modeled as a penalty function associated with each service center.

\noindent

\noindent \textbf{Limitations of Related Work}

The current state of the art most relevant to our work includes the work done in the area of network voronoi diagrams without capacities \cite{okabe,Demiryurek2012}, network voronoi diagrams under capacity constraints \cite{7123646,U2010,U2008}, weighted voronoi diagrams \cite{pdiag} and optimal location queries (e.g., \cite{Yao2014,Xiao2011,MelkoteD01,Diabat16}) 

Work done in the area of network voronoi diagrams without capacities \cite{okabe,Demiryurek2012} assume that the service centers have infinite capacity, an assumption not suitable in many real-world scenarios. On the other hand, work done in the area of network voronoi diagrams with capacities \cite{7123646,U2010,U2008,Yang2013} did not consider the notion of ``overload penalty.'' They perform allotments (of demand units) in an iterative fashion as long as there exists a service center with available capacity. In other words, the allotments stop when all the service centers are full (in terms of their capacity). This paper considers the problem in a more general setting in the sense that we allow the allotments to go beyond the capacities of the service centers. And after a service center is full, we use the concept of the \emph{overload penalties} for guiding the further allotments.

Weighted voronoi diagrams \cite{pdiag} are specialized voronoi diagrams. In these diagrams, the cost of allotting a demand unit $x$ to a service center $p$ is a linear function of the following two terms: (i) distance between $x$ and $p$ and, (ii) a \emph{real number} denoting the weight of $p$ as $w(p)$. LBDD problem is different from weighted voronoi diagrams. Unlike the weighted voronoi diagrams, our ``$w(p)$'' is a \emph{function} of the number of allotments already made to the service center $p$. And it would return a non-zero value only when the allotments cross beyond the capacity. Whereas in \cite{pdiag}, $w(p)$ is assumed to play its role throughout. 
 
Optimal location queries (e.g., \cite{Yao2014,MelkoteD01,Diabat16}) focus on determining a new location to start a new facility while optimizing a certain objective function (e.g., total distance between clients and facilities). Whereas, in LBDD, we already have a set of facilities which are up and running, and we want to load balance the demand around them.

The LBDD problem can be theoretically reduced (details in \cite{NET20477}) to min-cost matching in an unbalanced bi-partite graph, which yields a complexity of $O(n^3k + n^2\log n)$ when a generalized version of the Hungarian algorithm is used \cite{tarjan2012}.

\noindent \textbf{Our  Contributions:} This paper makes the following contributions:

\noindent (a) We define the concept of an allotment multigraph that captures all the ways in which an allotment can be perturbed, and the cost associated with each perturbation. 

 \noindent (b) We give an alternate characterization of optimality using the concept of negative loops induced by the allotment multigraph.
 
10

\noindent (c) The allotment multigraph treats the demand nodes as entities that can be pushed from one service center to another.  This approach allows us to efficiently handle a dynamic situation where the optimal allotment needs to be maintained under addition and deletion of demand nodes. We show that each dynamic operation can be  performed in  $O(k^3 + k^2logn)$ steps, where $k$ is the number of service centers and $n$ is the number of demand units. 

\noindent (d)   Instead of viewing the problem as that of optimally matching demand nodes to service center nodes, we  treat it as the problem of maintaining an optimal allocation during a sequence of $n$ demand node additions. This  improves the complexity of the problem from $O(n^3k)$ (complexity of min-cost bi-partite matching) to $O(nk^3 + nk^2logn)$.

\noindent \textbf{Outline:} The rest of this paper is organized as follows: In Section \ref{bc}, we provide the basic concepts and the problem statement. Section \ref{pa} presents our proposed approach. Section \ref{optsec} proves the correctness of our algorithms.   Finally in Section \ref{con} we conclude the paper.

\section{Basic Concepts and Problem Definition}
\label{bc}
\begin{definition}
\textbf{A Service Center} is a public service unit of a particular kind (e.g., schools, hospitals, COVID-19 testing centers in a city). A set of service centers is represented as $S=\{s_1,..., s_{n_s}\}$.  
\end{definition}

\begin{definition}
\textbf{A Demand unit} represents a unit population which is interested in accessing the previously defined service center. A set of demand units is represented as  $D=\{{d_1},..., d_{n_d}\}$, $n_d$ is the number of demand units. 
\end{definition}

\begin{definition}
\textbf{Capacity of a service center} $(c_{i})$ is the prescribed number of  demand units that a service center $s_i$ can accommodate comfortably. For example, the number of people a COVID-19 testing facility can test in a day (or a week) would define its capacity.  Similarly, the number of students a school can admit would correspond to the notion of capacity defined in this paper.
\end{definition}

\begin{definition}
\label{pdef}
\textbf{Penalty function for a service center} $(q_{i}())$ is a function which returns the ``extra cost'' ($>0$) that must be paid for every new allotment to the service center $s_i$ which has already 
  exhausted its capacity $c_{i}$.  
Penalty function takes into account the current status of $s_i$ (i.e., how many nodes have been already added to $s_i$) and then returns a penalty for the $j^{th}$ ($1\leq j \leq (n_d-c_{s_i})$) 
allotment. $q_{i}()$ returns only positive non-zero values and is monotonically increasing over $j$ ($1\leq j \leq (n_s-c_{i})$).  
\end{definition}
 
\begin{definition} 
\textbf{Demand-Service Cost Matrix $\mathcal{CM}$:} contains the cost (a positive integer) of assigning a demand unit $d_i\in D$ to a service center $s_j \in S$ (for all $<d_i,s_j>$ pairs). If in the given LBDD problem instance, service centers and demand units, come from a Geo-Spatial reference frame then the cost of assignment a demand unit $d_i$ to a service center $s_j$ can represent things such as shortest distance over the road network, travel-time and/or cost of traveling from $d_i$ to $s_j$ along the shortest path (in Dollars), Geodetic distance or Euclidean distance. 
\end{definition}

\subsection{Problem Statement}
\label{probdef}
We now formally define the problem of load balanced demand distribution by detailing the input, output and the objective function:

\noindent \textbf{Given}:			
			\begin{itemize}
			\item A set of service centers $S=\{s_1,..., s_{n_s}\}$.
			\item A set of demand units  $D=\{{d_1},..., d_{n_d}\}$.
			\item A demand-service cost matrix $\mathcal{CM}$ which contains the cost of allotting a demand unit $d_i \in D$ to a service center $s_j \in S$ ($\forall d_i \in D,~\forall s_j \in S$). 
			\item Capacity $(c_{i})$ of each service center $s_i \in S$.  
			\item Penalty function $(q_{i}())$ of each service center $s_i \in S$.
			\end{itemize}
			
\noindent \textbf{Output}: A mapping from the set of demand units to the set of service centers.  Each demand unit is allotted to only one service center. 
			
\noindent \textbf{Objective Function}: 
			\begin{multline}
			 \textit{\textbf{Minimize}}
			 \Bigg\{
			 \sum_{\substack{s_i \in Service \\Centers}} 	
				\Bigg\{
				\sum_{\substack{d_j \in Demand  \ unit\\ allotted \ to \  s_i}} 
					\mathcal{CM}(d_j,s_i)
					\Bigg\} + Total~Penalty~across~all~s_i \Bigg\}
			\label{eq2}   
			\end{multline}

\subsection{Variations of the LBDD problem} 
The LBDD problem can be varied along following two dimensions: (a) relationship between the total capacity (across all service centers) and the total demand; (b) Presence or absence of penalty functions on service centers. 
Along the first dimension, total capacity can be less, greater or more than the total capacity of the service centers.  On the second dimension, we have following two cases: (i) \emph{any service center} can be overloaded and, (ii) \emph{no service center} can be overloaded. Note that according to our definition, a service center $s_i$ can be overloaded beyond its capacity only when its corresponding penalty function ($q_{i}()$) is defined. Otherwise, $s_i$ must not be assigned more than $c_{i}$ demand units. 

Note that when total demand is greater than the total capacity and service centers are not allowed to be overloaded, only some demand units would be assigned to service centers. In this case, our algorithm would implicitly pick the optimal set of demand nodes which need to assigned to service centers such that the objective function attains its lowest value.  
 
 The following theorem establishes that the big-oh complexity of the case where overloading is not permitted is at most that of case where penalties are assessed for overloading.
 
 \begin{theorem}

\label{subsumeThm}
Let $\mathcal{L}$ be an instance of LBDD with $n_d$ demand node, $n_s$ service center nodes, and a cost matrix $\mathcal{CM}$, with the constraint that no service center can be overloaded.
 We can construct 
an instance $\mathcal{L}_1$ of LBDD, consisting of  a cost matrix $\mathcal{CM}_1$, that has $n_d$ demand nodes and $n_s+1$ service center nodes such that:\\
 (1) $\mathcal{L}_1$ allows us to exceed the  capacity for all service centers, and has a penalty function.\\
(2) Any optimal allocation, $\mathcal{A}_1$, for $\mathcal{L}_1$,   will contain within it an optimal allocation, $\mathcal{A} $, for $\mathcal{L} $. \\
(3) Let $\mathcal{A}_1$  be any  allocation   for $\mathcal{L}_1$ such that $\Gamma(\mathcal{A}_1)$ does not have any negative cost cycles. $\mathcal{A}_1$ contains within it an  allocation $\mathcal{A} $, for $\mathcal{L}$, such that $\Gamma(\mathcal{A})$ does not have any negative cost cycles.
 
\end{theorem}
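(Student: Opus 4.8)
The plan is to prove all three parts from a single explicit construction, so I would begin by exhibiting $\mathcal{L}_1$. I would keep the $n_s$ original service centers with their capacities $c_1,\dots,c_{n_s}$ and the cost entries $\mathcal{CM}_1(d,s_i)=\mathcal{CM}(d,s_i)$ for $i\le n_s$, and then attach one extra \emph{overflow} center $s_{n_s+1}$ whose capacity I set to $n_d$ (so it can never itself be overloaded) and whose cost to every demand node is a single large constant $W$, chosen strictly larger than every entry of $\mathcal{CM}$. To each original center I would attach a prohibitively large penalty function, say $q_i(1)>n_dW$, and to $s_{n_s+1}$ any positive increasing penalty (it is never triggered); this makes $\mathcal{L}_1$ a legal overload-with-penalty instance, which is part (1). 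The intended reading is that $s_{n_s+1}$ plays the role of the ``unassigned'' pool of $\mathcal{L}$, and the constants are tuned so that at any good allocation the original centers behave exactly as under a hard capacity constraint.

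For part (2) I would first prove a lemma that no optimal $\mathcal{A}_1$ overloads an original center: if some $s_i$ held more than $c_i$ units, rerouting one offending unit to $s_{n_s+1}$ would change the objective by at most $W-q_i(1)<0$, contradicting optimality. Hence $\mathcal{A}:=\mathcal{A}_1$ restricted to $\{s_1,\dots,s_{n_s}\}$ respects every capacity and is feasible for $\mathcal{L}$. I would then record the exact objective identity $\mathrm{obj}_{\mathcal{L}_1}(\mathcal{A}_1)=\mathrm{obj}_{\mathcal{L}}(\mathcal{A})+W\,n_{\mathrm{ovf}}$, where $n_{\mathrm{ovf}}$ is the number of nodes sent to $s_{n_s+1}$ (the penalty terms vanish because nothing is overloaded). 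Because $W$ exceeds every entry of $\mathcal{CM}$, any optimal $\mathcal{A}_1$ must fill all original capacity before using the overflow, so $n_{\mathrm{ovf}}$ equals the fixed constant $\max(0,\,n_d-\sum_i c_i)$ for every optimum; minimizing the left side is therefore equivalent to minimizing $\mathrm{obj}_{\mathcal{L}}(\mathcal{A})$, and since every feasible allocation of $\mathcal{L}$ extends to one of $\mathcal{L}_1$ with the same additive constant, $\mathcal{A}$ is optimal for $\mathcal{L}$.

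For part (3) I would argue by lifting cycles rather than by comparing optima. The graph $\Gamma(\mathcal{A}_1)$ is $\Gamma(\mathcal{A})$ together with the extra node $s_{n_s+1}$ and its incident edges, so every cycle that stays inside the original centers is simultaneously a cycle of both multigraphs. The key observation is that such a cycle is a \emph{load-preserving} reallocation: each center it visits both receives and releases one unit, so its net penalty contribution is unchanged and its cost reduces to a sum of $\mathcal{CM}$-differences, which is identical in $\mathcal{L}$ and $\mathcal{L}_1$ since the two instances share $\mathcal{CM}$ on the originals. (An edge that would push a unit into an already-full original center is absent from $\Gamma(\mathcal{A})$ under the hard capacity constraint, while in $\Gamma(\mathcal{A}_1)$ it carries an additional positive penalty term; either way it cannot make an original-only cycle cheaper in $\Gamma(\mathcal{A}_1)$ than in $\Gamma(\mathcal{A})$.) Consequently a negative cycle in $\Gamma(\mathcal{A})$ would lift to a negative cycle of no greater cost in $\Gamma(\mathcal{A}_1)$, contradicting the hypothesis; hence $\Gamma(\mathcal{A})$ has none.

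I expect the main obstacle to be making the edge-cost comparison between $\Gamma(\mathcal{A})$ and $\Gamma(\mathcal{A}_1)$ fully rigorous, because the two instances live under different regimes -- a hard capacity bound versus a prohibitive penalty -- and have different vertex sets. The delicate points are verifying that every edge among the original centers carries the same cost in both graphs (equivalently, that a load-preserving cyclic reallocation never activates a penalty term), and confirming the ``fill all capacity first'' property of optimal $\mathcal{A}_1$ that pins down $n_{\mathrm{ovf}}$ in part (2); both hinge on the quantitative choices (namely $W$ exceeding every entry of $\mathcal{CM}$, and $q_i(1)>n_dW$), so I would isolate these inequalities as explicit claims and discharge them before assembling the three parts.
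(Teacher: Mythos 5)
Your construction and argument are essentially the paper's own: the paper likewise adjoins a single overflow center $s_{n_s+1}$ with uniform cost $max+1$ (your $W$), makes overloading the originals prohibitively expensive (it sets $q_i=\infty$ for $i\le n_s$ and $q_{n_s+1}=0$ where you use finite $q_i(1)>n_dW$ and an overflow capacity of $n_d$), proves (2) by the same fill-to-capacity / best-subset exchange reasoning, and gets (3) directly from $\Gamma(\mathcal{A})$ being a subgraph of $\Gamma(\mathcal{A}_1)$ with identical transfer-edge costs, which is exactly your cycle-lifting observation (note that by Definition \ref{Gammadef} transfer edges carry no penalty term, so your parenthetical worry about penalty-laden edges is moot). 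If anything, your finite-penalty variant is slightly more careful than the paper's, since $q_i(1)>n_dW$ keeps the penalty functions positive and monotone as required by Definition \ref{pdef}, which the paper's $\infty/0$ choice technically violates.
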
 
\begin{proof}
 Let $max$ be the largest value in $\mathcal{CM}$. We construct $\mathcal{CM}_1$ by adding a  column for service center $s_{n_s+1}$ and set all the values in this row to $max+1$.  To define $\mathcal{L}_1$, we set the penalty $q(s_i) =  \infty$ for $1 \leq i \leq n_s$, and $q(s_{n_s+1}) = 0$.
 Let $\mathcal{A}_1$  be an optimal  allocation   for $\mathcal{L}_1$. From the values in $\mathcal{CM}_1$ and the penalty function,  it follows that $\mathcal{A}_1$ must  fill service centers $s_1, s_2,\ldots s_{n_s} $ to capacity, and assign the remainder to  $s_{n_s+1}$. Since all the demand nodes have the same cost for $s_{n_s+1}$, it follows that  $\mathcal{A}_1$ should choose the best possible subset of demand nodes for $s_1, s_2,\ldots s_{n_s} $.  Therefore, if $\mathcal{A}_1$ is optimal, the allotment $\mathcal{A}$, formed by restricting 
 $\mathcal{A}_1$ to service centers $s_1, s_2,\ldots s_{n_s} $ is an optimal assignment for $\mathcal{L}$. Since $\Gamma(\mathcal{A})$ is a subgraph of $\Gamma(\mathcal{A}_1)$, the third item follows.   
 \end{proof}

\section{Proposed Approach}
\label{pa}
This section presents our proposed Allotment Subspace Re-adjustment based approach   for the LBDD problem. With the intention of keeping the discussion concise, we mainly focus only the variant where the total capacity is less than total demand, and the service centers are allowed to be overloaded (after ``paying'' their respective penalty). The remaining of section is organised as follows. Section \ref{over} presents an overview of our proposed approach. In Section \ref{asr}, we introduce our key computational idea of allotment subspace re-adjustment.

\subsection{Overview of the algorithm}
\label{over}
Overall, our algorithm follows an incremental strategy to build the optimal solution. Say that we have an optimal allotment for $i-1$ demand nodes. In the next stage we add another demand node $d_i$ to some service center $s_j$.

Now, we attempt to adjust our current allotment to find the new optimal allocation. The adjustment involves identifying negative cycles and negative paths in   what we refer to as the \emph{allotment subspace multigraph} (details in next section). 
This identification, as we shall see, can be done using the Bellman-Ford shortest path algorithm. Following this, we remove these negative cycles and paths by shifting 
demand nodes along these cycles and paths and finding  a new allotment.

\subsection{Allotment Subspace Re-adjustment} 
\label{asr}
The key aspect of this paper is the concept of re-adjustments in the space of, what we refer to as, the \emph{allotment subspace multi-graph}. Each service center (given in the problem instance) forms a node in this allotment subspace multi-graph. Each edge in the multigraph represents the potential transfer of a demand node from one service center another. Formally, we define this as follows. 

\begin{definition}
Given an allotment $\mathcal{A}$ comprising of tuples of the form $<d_i,s_j>$ (demand unit $d_i$ is allotted to $s_j$), the allotment sub-space multi-graph $\Gamma(\mathcal{A}) = (M,N)$ is defined as follows:
\begin{itemize}
    \item \textbf{Set of nodes} ($M$): Each service center $s_i \in S$ is a node in $M$.
    \item  \textbf{Set of edges} ($N$): Each edge is defined as a triple $e=(s_i,s_j,dn)$. Here, $e$ represents the potential transfer of the demand unit $dn$ (which is currently allotted to $s_i$ in $\mathcal{A}$) from $s_i$ to $s_j$. Cost of $e$ is defined as $\mathcal{CM}(dn,s_j)-\mathcal{CM}(dn,s_i)$. This cost is referred to as the \emph{transfer-cost} of demand unit $dn$. These edges are referred to as {\em demand node transfer edges}, or simply {\em transfer edges}, when they need to be disambiguated from other edges.
\end{itemize}
\label{Gammadef} 
\end{definition}

\begin{definition}
Given an allotment $\mathcal{A}$,   {\bf Mincost Edge Graph}, $\Gamma_{min}(\mathcal{A}) = (V, E)$ is a complete directed graph constructed as follows:  
\begin{itemize}
    \item \textbf{Set of nodes} ($V$): Each service center $s_i \in S$ is a node in $V$.
    \item  \textbf{Set of weighted edges} ($E$): Let $D_i$ be the set of demand nodes assigned to service center $s_i$. 
    The weight of the edge, $(s_i, s_j)$, directed from $s_i$ to $s_j$ is  least cost 
    of transferring a demand node from $D_i$ to $s_j$. Formally, it is the smallest number in the set $\{dn \in D_i \mid (\mathcal{CM}(dn, s_j) - \mathcal{CM}(dn, s_i) )\}$.  We refer to this edge as the minimum weight (cost) edge from $s_i$ to $s_j$ in $\Gamma (\mathcal{A})$, denoted $\Gamma_{min} (i,j)$.
\end{itemize}

	\begin{definition}
  Given an allotment $\mathcal{A}$ the {\bf occupancy} of a service center $s_j$, denoted $o_j$,  is defined as the number of demand nodes assigned to $s_j$ under $\mathcal{A}$. The value of the penalty function  $(q_{j}(o_j))$, the amount of penalty incurred when the last demand node was assigned to $s_j$.   $(q_{j}(o_j+1))$ is the penalty that will be incurred when the next demand node is assigned to $s_j$. In the event of ambiguity about the allotment being referred to, we use the notation $o_j(\mathcal{A})$ to specify the allotment $\mathcal{A}$.
\label{occupydef} 
\end{definition}

\label{Mincostgraphdef} 
\end{definition}
\begin{definition}
Given an allotment $\mathcal{A}$,   {\bf Penalty Transfer Graph}, $\Gamma{pen}(\mathcal{A}) = (V, E)$ is a complete directed graph constructed as follows:  
\begin{itemize}
    \item \textbf{Set of nodes} ($V$): Each service center $s_i \in S$ is a node in $V$.
    \item  \textbf{Set of weighted edges} ($E$):   
    The weight of the edge, $e_{ij}$, directed from $s_i$ to $s_j$  is the change in penalty when a demand node currently assigned to $s_j$ is transferred to $s_i$. Formally,  the weight of $(s_i, s_j)$ is $q_{i}(o_i+1)- q_{j}(o_j).$ We refer to this edge as the penalty transfer edge from $s_i$ to $s_j$ in $\Gamma (\mathcal{A})$, denoted $\Gamma_{pen} (i,j)$.
\end{itemize}
\label{Penaltygraphdef} 
\end{definition}

Consider Figure \ref{ncyca} which illustrates a partially constructed solution for a sample problem instance. For ease of understanding, the figure illustrates the LBDD instance where both service centers and demand units are present in a road network represented as directed graph. Here, nodes $S1$, $S2$, $S3$ and $S4$ are service, whereas other nodes (e.g., $A$, $B$, $C$, etc) are demand units (people). In this problem, cost of assigning a demand node $v$ to a service center $r$ is the shortest distance between $v$ and $r$. For instance, cost of assigning demand unit $B$ to $S1$ is the shortest distance between $B$ and $S1$ in the graph (which is $5$ in this case). The figure also details the total capacity and penalty values for each of the service centers. In the partial assignment shown in Figure \ref{ncyca}, the first few demand vertices ($A,B,C,D$ and $E$) have already been processed. Demand vertices which are allotted a service center are filled using the same color as that of their allotted service center. For e.g., demand vertices $A$, $B$ are assigned to resource unit $S1$. Nodes which are not yet allotted are shown without any filling.

The allotment sub-space graph of this network would contain four nodes (one for each service center). And, between any two service centers $s_i$ and $s_j$, it would contain directed edges representing transfer of demand nodes across the service centers. Edges directed towards $s_j$ (from $s_i$) would represent demand vertices being given to $s_j$ (from $s_i$). 

Figure \ref{ncycb} illustrates some edges of the allotment sub-space graph ($\Gamma()$) of the allotment shown in Figure \ref{ncyca}. To maintain clarity, we do not show all the edges in the Figure. Figure \ref{ncycb} illustrates only the edges whose tail node is service center S1. Now, given that S1 was allotted two demand nodes ($A$ an $B$ according to Figure \ref{ncyca}), the node corresponding to S1 in the allotment sub-space graph would have six edges (two edges to each of the other service centers) coming from S1. For instance, consider the two edges directed from S1 to S2 in Figure \ref{ncycb}. One of them represents transfer of demand node $A$ and the other one represents the transfer of demand node $B$ to the service center S2. Cost of the edge is defined as the difference in distance to the service centers. For instance, cost of edge corresponding to $A$ in the graph is defined as $\mathcal{CM}(A,S2)-\mathcal{CM}(A,S1)$ (which is $2$ in our example). Note that edge costs in the allotment sub-space graph may be negative in some cases.

\subsubsection{Maintaining $\Gamma_{min}$}

 We can do this using $2\times$ ${n_s}\choose {2}$ number of minheaps, where $n_s$ is the number of service centers. Basically, one heap for each ordered pair of service centers. This heap, \emph{referred to as BestTransHeap}, would ordered on the transfer cost of demand units (across a pair of service centers). For any order pair of service centers <$s_i,s_j$>, the top of its corresponding $BestTransHeap_{ij}$ would contain the demand unit (which is currently allotted to $s_i$) which has the lowest transfer cost to $s_j$. And while constructing $\Gamma_{min}()$, for any ordered pair <$s_i,s_j$>, we would create only one edge (between $s_i$ and $s_j$) which corresponds to the demand unit at the top of $BestTransHeap_{ij}$

\subsubsection{Using allotment subspace  multigraph for improving the solution}
The allotment subspace  multigraph helps us to improve the current solution in following two ways: (a) Negative cycle removal and, (b) Negative path removal. As one might expect, these two are not completely independent of each other.     

\begin{figure*}[ht] 
\begin{center} 
\subfigure[Sample allotment.]{\label{ncyca}\includegraphics[width=0.39\textwidth]{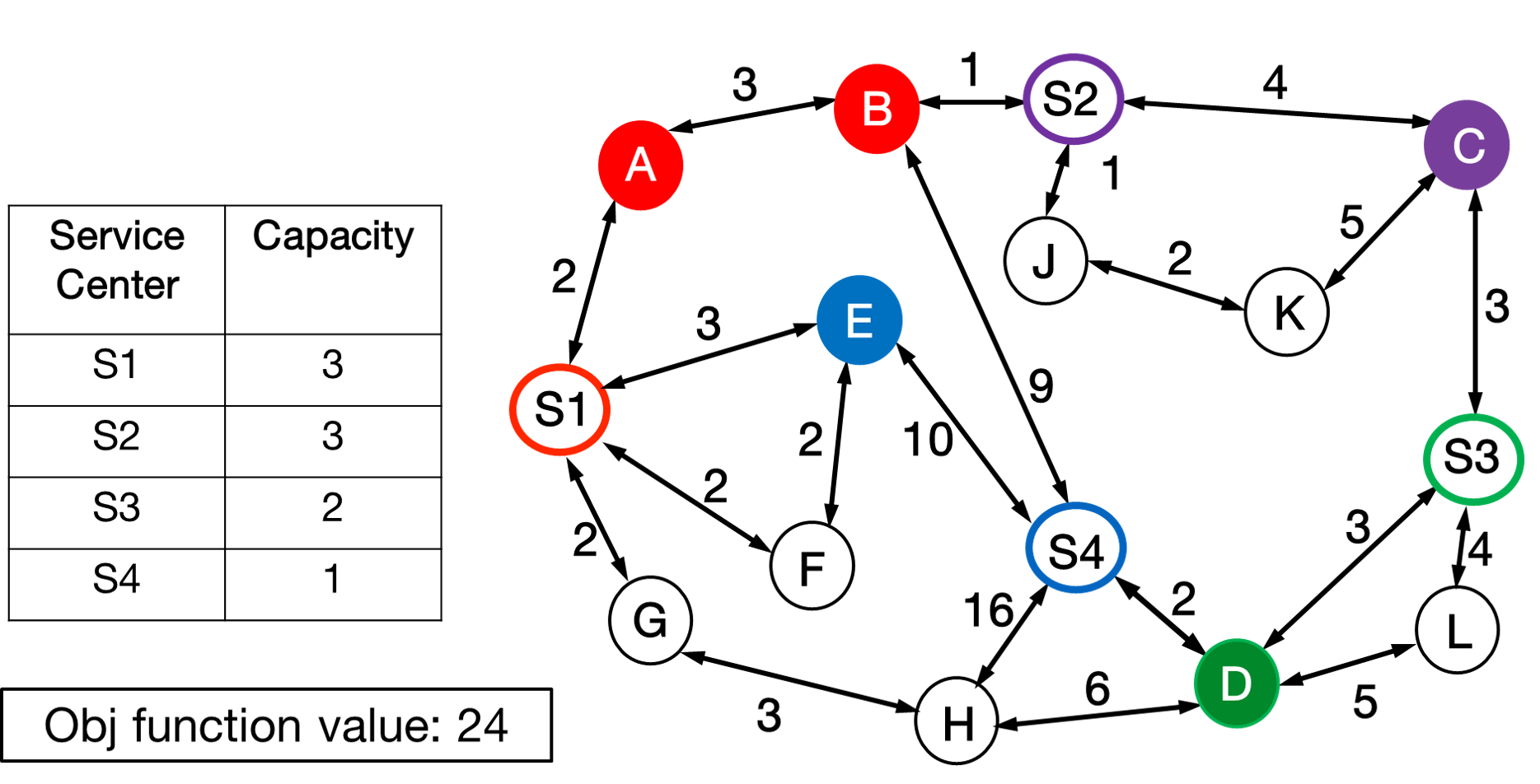}}
\subfigure[Sample edges in the allotment sub-space multi-graph]{\label{ncycb}\includegraphics[width=0.25\textwidth]{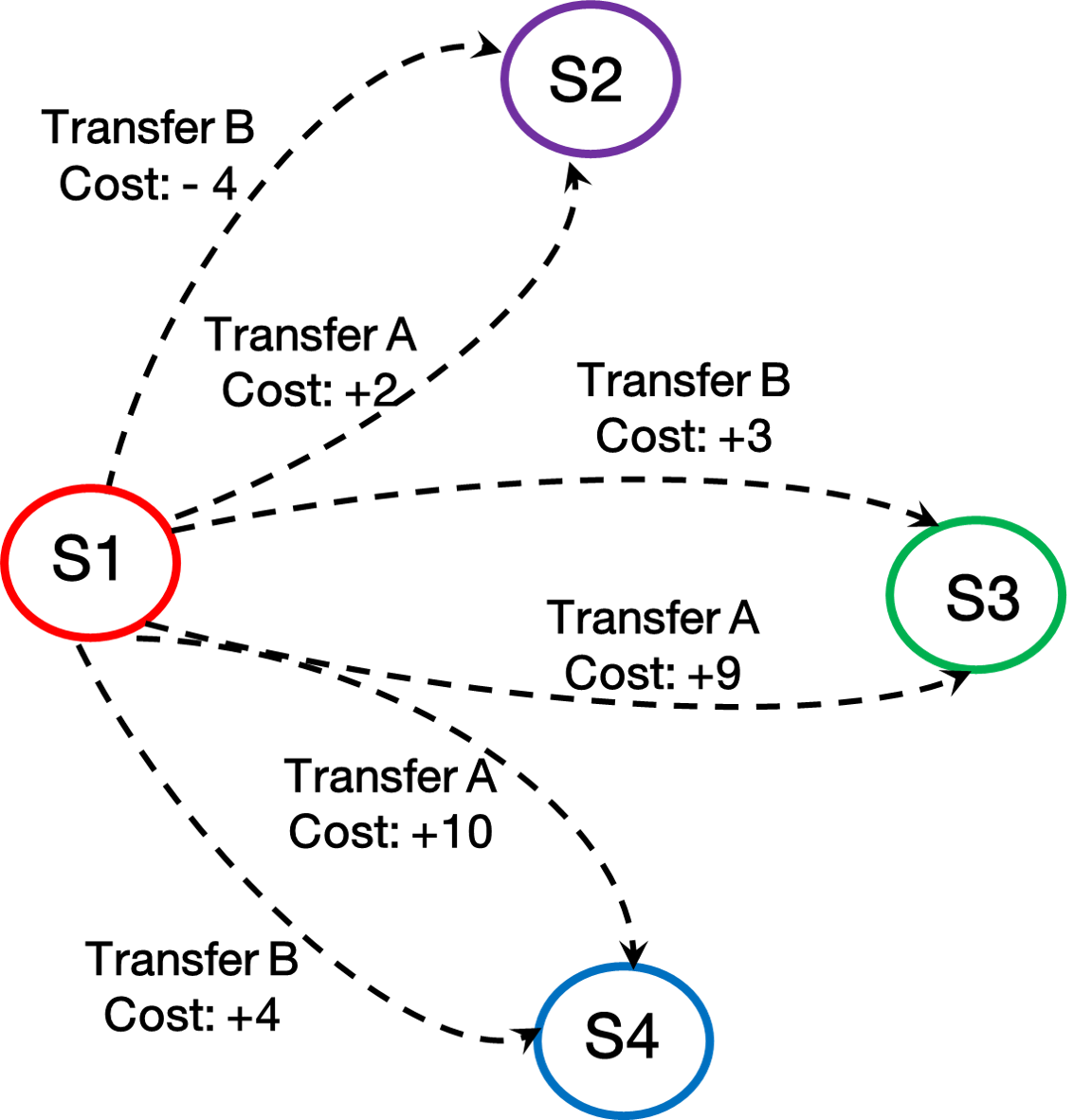}}
\subfigure[A negative cycle]{\label{ncycc}\includegraphics[width=0.30\textwidth]{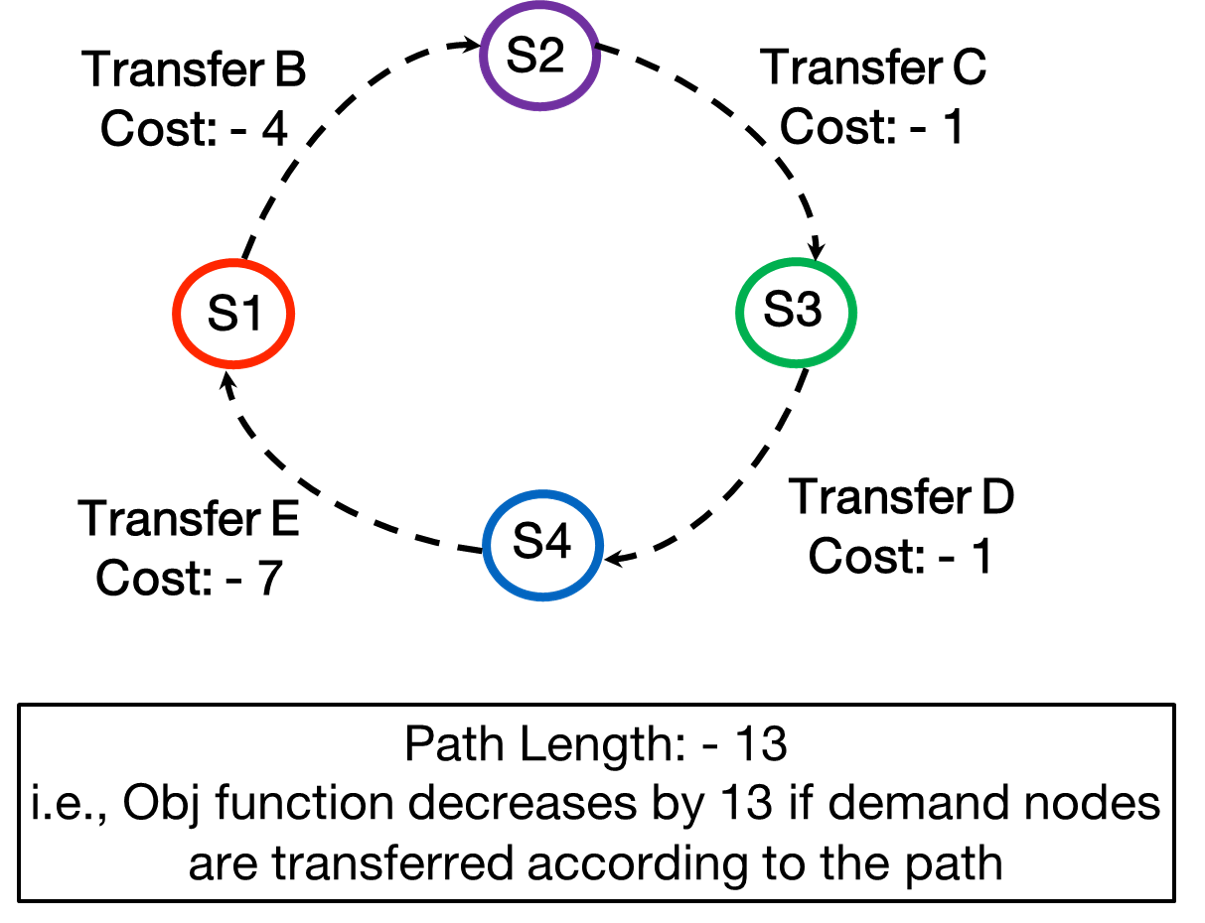}}
\end{center}
\caption{Illustrating allotment sub-space multi-graph on sample allotment. Vertices S1, S2, S3 and S4 are service centers. demand nodes allotted to a service center are shaded using the same color as that of the service center, e.g., demand node A is allotted to S1 in the sub-figure (a). }
\label{ncyc}  
\vspace{-2mm}   
\end{figure*}   

\noindent \textbf{Adjustment via Negative Cycle Removal:} Consider the cycle shown in Figure \ref{ncycc}. This directed cycle appears in the allotment subspace multigraph of the allotment shown in Figure \ref{ncyca}. Total cost of this cycle happens to be $-13$. In other words, we were to we adjust our allotment (shown in Figure \ref{ncyca}) by transferring the demand nodes corresponding to the edges in the cycle (e.g., transfer $B$ to S2, transfer $C$ to S3, transfer $D$ to S4, transfer $E$ to S1) then, the objective function value would decrease by $13$ units.  

\begin{figure}[h]  
\begin{center} 
\subfigure[Sample allotment.]{\label{npa}\includegraphics[width=80mm]{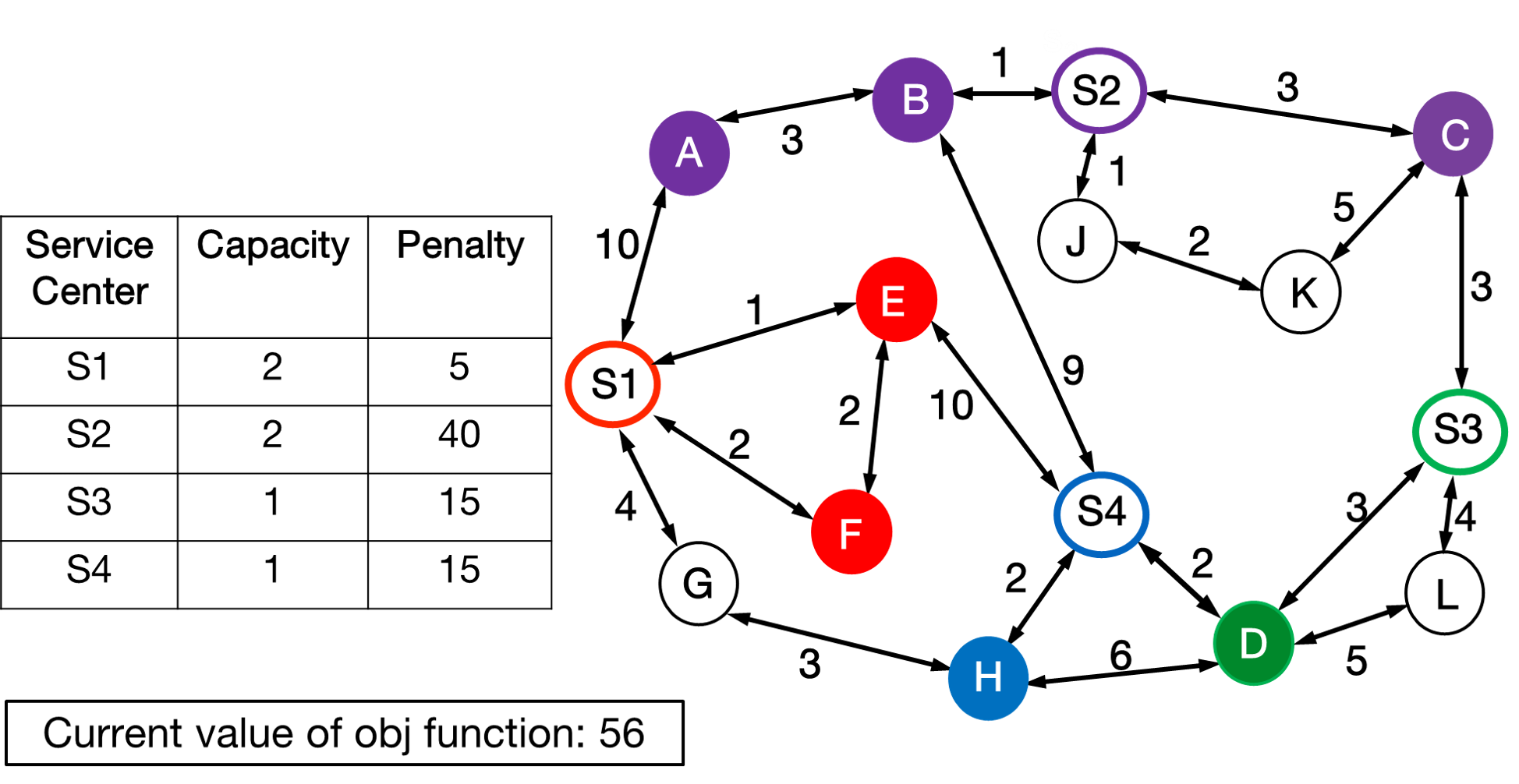}}
\subfigure[A negative path in the allotment sub-space graph]{\label{npb}\includegraphics[width=60mm]{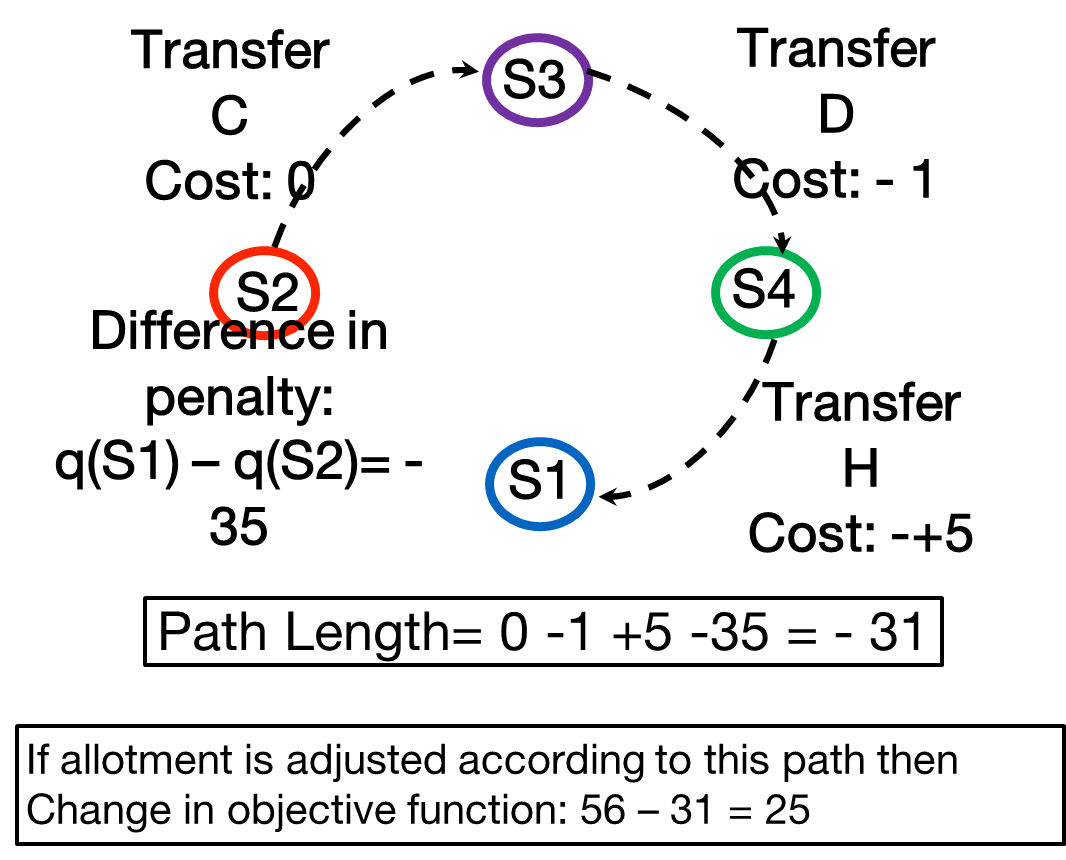}}
\end{center}
\caption{Illustrating negative paths in allotment sub-space multi-graph. Vertices S1, S2, S3 and S4 are service centers. Demand nodes allotted to a service center are shaded using the same color as that of the service center.}
\vspace{-2mm}   
\end{figure} 

\noindent \textbf{Adjustment via Negative Path Removal:}  Consider the allotment shown in Figure \ref{npa}. Figure \ref{npb} illustrates a path in its corresponding allotment subspace multigraph. On this path, we transfer node $C$ to S3, $D$ to S4 and finally, transfer $H$ to S1.  Also note that S2 was initially overloaded (in Figure \ref{npa}). Thus, if we are to adjust the allotment according the mentioned path, then the total penalty paid at S2 would decrease, since it is transferring a demand node ($C$) to another service center. On the other hand, we now have to pay the penalty at S1 as its accepting demand node beyond its capacity. Following is an expression of the total cost of this path:

\noindent Cost of Path = TransferCost(C~to~S3) + TransferCost(D~to~S4) + 

\hspace{1.5cm} TransferCost(H~to~S1) + Penaltyat(S1) - Penaltyat(S2)

\hspace{1.5cm} = \textbf{-31}      

Thus, if we adjust our initial allotment (shown in Figure \ref{npa}) according to the path shown in Figure \ref{npb}, then objective function value would decrease by 31. 

\noindent \textbf{ Loops:}
Cycles and paths can be generalized into the 
concept of {\bf  Loop.} A loop, induced by the allotment $ \mathcal{A}$,  starting from $s_i$ and ending in  $s_j$ ($s_i$ to $s_j$),  can be defined by any one of the following: \\
{\em   (i)}  a path from  $s_i$ to $s_j$ in $\Gamma(\mathcal{A})$, and an edge from $s_j$ to $s_i$ in $\Gamma(\mathcal{A})$. Such a loop is also a cycle in  $ \mathcal{A}$, containing an edge from  $s_i$ to $s_j$.  \\
{\em   (ii)} a path from  $s_i$ to $s_j$ in $\Gamma(\mathcal{A})$, and the edge from $s_j$ to $s_i$ in $\Gamma_{pen}(\mathcal{A})$.  Note that in this case, the penalty transfer edge is used to complete the loop.\\
If one of the two costs is negative, we say the the allotment $ \mathcal{A}$ induces a {\em negative cost loop} starting at $s_i$.

 A negative loop from $s_i$ to $s_j$ is removed just like we remove a negative cost path or   a negative cost cycle, depending on how we chose the edge from $s_j$ to $s_i$.

\begin{proposition}
Let $\mathcal{A}$ be any allotment. If  $ \mathcal{A}$ induces a non-simple negative loop, then $\mathcal{A}$   also induces a simple negative loop.   It therefore suffices to show the non-existence of \emph{simple} loops of negative cost. 
\label{simplecyc}
\end{proposition}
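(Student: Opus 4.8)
The plan is to treat the two forms a loop can take, from the \textbf{Loops} definition, separately and reduce a non-simple negative loop of each form to a simple negative loop. Recall that a loop is either (i) a cycle built entirely from transfer edges of $\Gamma(\mathcal{A})$, or (ii) a transfer-edge path from $s_i$ to $s_j$ in $\Gamma(\mathcal{A})$ closed off by a single penalty transfer edge of $\Gamma_{pen}(\mathcal{A})$ directed from $s_j$ back to $s_i$. The distinction matters because in form (ii) one edge is a penalty edge whose cost is fixed by its endpoints, and this edge cannot be absorbed into a generic cycle decomposition.

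First I would dispose of form (i) using the classical fact that any closed walk in a directed multigraph decomposes into a collection of simple cycles whose costs sum to the cost of the walk. Since the given non-simple cycle has negative total cost, at least one simple cycle in the decomposition must have negative cost, and any simple cycle of $\Gamma(\mathcal{A})$ is itself a simple negative loop of form (i). This settles the case where the loop carries no penalty edge.

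The substantive case is (ii). Here I would argue by an iterative path-shortening procedure that keeps the penalty edge intact. If the transfer path $P$ from $s_i$ to $s_j$ is non-simple, it revisits some vertex, and the portion of $P$ between the two visits is a cycle $C$ in $\Gamma(\mathcal{A})$. If $\mathrm{cost}(C) < 0$, then $C$ is a negative cycle, i.e.\ a form-(i) loop, and we are reduced to the previous paragraph. If $\mathrm{cost}(C) \ge 0$, I delete $C$ from $P$ to obtain a strictly shorter path $P'$ from $s_i$ to $s_j$ with $\mathrm{cost}(P') \le \mathrm{cost}(P)$; closing $P'$ with the \emph{same} penalty edge from $s_j$ to $s_i$ yields a loop whose total cost is at most the original and hence still negative. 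Because each deletion strictly reduces the number of edges, the process terminates, leaving either a negative cycle (handled as in form (i)) or a simple path that, closed by the penalty edge, is a simple negative loop of form (ii). Either way a simple negative loop is produced, and the contrapositive gives the concluding remark that it suffices to rule out simple negative loops.

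The main obstacle is precisely this form-(ii) case, since the fixed penalty edge forbids a direct appeal to cycle decomposition on the whole loop. The two observations that make the shortening argument go through are: (a) removing an interior subcycle from $P$ leaves the endpoints $s_i$ and $s_j$ unchanged, so the penalty edge and its contribution to the cost are preserved; and (b) only subcycles of non-negative cost are removed, which can only decrease the loop's cost and therefore cannot destroy negativity, while any negative subcycle encountered is immediately harvested via the form-(i) argument. Termination then follows from the monotone decrease in path length.
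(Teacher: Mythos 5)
Your proposal is correct. Note that the paper itself states Proposition \ref{simplecyc} without any proof at all, so there is nothing to compare against line by line; your argument supplies exactly the justification the paper leaves implicit, and it does so using the same toolkit the paper relies on elsewhere (the cycle-decomposition fact recorded as Proposition \ref{digraphdecomp} is precisely your form-(i) step). Your real contribution is the form-(ii) analysis: since a loop by definition contains at most one penalty transfer edge, whose weight is determined by its endpoints and the occupancies under $\mathcal{A}$ (all costs here are evaluated in the fixed allotment $\mathcal{A}$, not after executing transfers), excising interior subwalks of the transfer path leaves that edge and its cost untouched, and your dichotomy --- harvest any negative subcycle immediately via form (i), otherwise delete a nonnegative one and strictly shorten --- terminates and preserves negativity. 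One small imprecision worth fixing: the segment of $P$ between two visits of a repeated vertex is in general a closed \emph{walk}, not necessarily a simple cycle; this is harmless, since a negative closed walk of transfer edges reduces to a simple negative cycle by your form-(i) paragraph, and a nonnegative closed walk can be deleted wholesale, but the wording should say so. It would also not hurt to remark that the resulting simple loop uses each vertex at most once, hence each demand node at most once (all transfer edges carrying a given demand node leave the single service center it is assigned to under $\mathcal{A}$), so the simple loop is a genuinely realizable re-adjustment --- a point the paper tacitly assumes when it removes such loops.
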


\begin{proposition}
 
A directed graph, $G(V,E)$, admits a decomposition
into directed cycles if and only if,    $\forall v \in V$, $indegree(v) = outdegree(v)$.  
\label{digraphdecomp}
\end{proposition}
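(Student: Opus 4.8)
The plan is to prove the two directions separately, with the forward (``only if'') direction being a short counting argument and the reverse (``if'') direction an induction on the number of edges.

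For necessity, I would assume $G$ already decomposes into edge-disjoint directed cycles $C_1,\ldots,C_m$ and fix an arbitrary vertex $v$. Each cycle $C_t$ passing through $v$ uses exactly one edge entering $v$ and exactly one edge leaving $v$ for each passage through $v$, so its contribution to $indegree(v)$ equals its contribution to $outdegree(v)$. Since the cycles partition $E$, summing these equal contributions over all $t$ yields $indegree(v)=outdegree(v)$. As $v$ was arbitrary, the degree condition holds everywhere.

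For sufficiency I would induct on $|E|$, the base case $|E|=0$ being the empty decomposition. For the inductive step the first task is to extract a single directed cycle. Restricting attention to the vertices of positive degree, the hypothesis $indegree(v)=outdegree(v)$ forces each such vertex to satisfy $outdegree(v)\ge 1$; moreover any vertex reached by a walk has $indegree\ge 1$ and hence, by the balance condition, $outdegree\ge 1$, so it always admits a further outgoing edge. Thus a directed walk following outgoing edges never gets stuck, and since $V$ is finite it must revisit a vertex within $|V|$ steps; the segment between the two visits to the first repeated vertex is a directed cycle $C$. I would then delete $E(C)$ from $G$ to form $G'$: because $C$ is simple, each vertex on it loses exactly one incoming and one outgoing edge, so $G'$ still satisfies $indegree=outdegree$ at every vertex while having strictly fewer edges. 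By the induction hypothesis $G'$ decomposes into edge-disjoint directed cycles, and appending $C$ gives the desired decomposition of $G$.

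I expect the only delicate point to be this cycle-extraction step, namely justifying that the walk can always be continued (which is precisely where the degree hypothesis enters) and confirming that removing a simple directed cycle preserves the in/out-degree balance so that the induction hypothesis applies to $G'$. Both facts are routine, so the argument is essentially complete once they are made explicit.
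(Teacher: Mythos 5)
Your proof is correct. The paper itself states Proposition~\ref{digraphdecomp} without proof, treating it as a standard fact, so there is no authorial argument to compare against; your two-direction argument (counting per-cycle contributions for necessity, induction on $|E|$ with cycle extraction for sufficiency) is precisely the canonical proof one would supply. The two delicate points you flag are handled correctly: stopping the walk at the \emph{first} repeated vertex guarantees the extracted cycle is simple and uses distinct edges, and deleting a simple cycle removes exactly one incoming and one outgoing edge at each of its vertices, preserving the balance condition for the induction. One small remark: since the paper applies this proposition to the difference \emph{multigraph} $DM(\mathcal{A}, \mathcal{A}^{opt})$, the statement should be read as a decomposition of the edge multiset into edge-disjoint directed cycles of a finite multigraph --- your argument goes through verbatim in that setting, as parallel edges and repeated visits pose no obstacle to the walk-and-extract step.
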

\begin{theorem}
\label{optimalityThm}
  $\mathcal{A}$ is an optimal allotment if and only if  $\mathcal{A}$ does not induce any negative cost loops.
\end{theorem}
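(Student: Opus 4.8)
\section*{Proof proposal}

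The plan is to prove the two directions separately, dispatching the easy ``only if'' by a direct exchange argument and reserving the real work for the ``if'' direction, which I would attack by decomposing the symmetric difference between $\mathcal{A}$ and an optimum.

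For the forward direction I would argue the contrapositive: a negative loop yields a strictly cheaper allotment, so an optimal $\mathcal{A}$ can induce none. Given a negative loop I physically shift the demand nodes labelling its transfer edges. If the loop is of type (i) (a cycle), every service center on it relinquishes exactly one node and receives exactly one node, so all occupancies---and hence all penalties---are unchanged, and the net change in the objective equals the sum of the transfer-edge costs, i.e.\ the cycle cost. If the loop is of type (ii) (a path from $s_i$ to $s_j$ closed by a penalty edge), the interior vertices still give and receive one node each, while $s_i$ loses a net node and $s_j$ gains one; the assignment cost changes by the sum of the path's transfer-edge costs and the penalty changes by exactly $q_{j}(o_j+1)-q_{i}(o_i)$, which is precisely the weight of the closing edge in $\Gamma_{pen}(\mathcal{A})$. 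In both cases the objective drops by exactly the loop cost, so a negative loop certifies suboptimality.

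For the converse I again use the contrapositive: assuming $\mathcal{A}$ is not optimal, fix an allotment $\mathcal{B}$ with strictly smaller objective and exhibit a negative loop. The key object is the difference multigraph $H$ on the service centers whose edges are, for every demand node $dn$ reassigned between $\mathcal{A}$ and $\mathcal{B}$, the transfer edge $(\mathcal{A}(dn),\mathcal{B}(dn),dn)$---a genuine edge of $\Gamma(\mathcal{A})$ since $dn$ sits at $\mathcal{A}(dn)$ under $\mathcal{A}$. At each vertex the out-degree minus in-degree of $H$ equals $o_s(\mathcal{A})-o_s(\mathcal{B})$, and since both allotments place the same total demand these excesses sum to zero. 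A standard flow/path decomposition (its purely cyclic residue handled by Proposition \ref{digraphdecomp}) writes $H$ as a disjoint union of directed cycles together with directed paths, each running from an excess-positive ``source'' to an excess-negative ``sink,'' the number of paths leaving a source $s$ being its excess $o_s(\mathcal{A})-o_s(\mathcal{B})$. I would turn each cycle into a type-(i) loop and close each path with the matching penalty edge of $\Gamma_{pen}(\mathcal{A})$ to form a type-(ii) loop.

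The crux---and the step I expect to be the main obstacle---is showing that the total cost of these loops is at most $\mathrm{obj}(\mathcal{B})-\mathrm{obj}(\mathcal{A})<0$, because each penalty edge is priced using the marginal penalty at $\mathcal{A}$ rather than the true incremental penalties incurred as occupancies actually change. The transfer edges cause no trouble: summed over all loops they reproduce exactly the assignment-cost difference. For the penalties I would invoke the monotonicity of $q$ from Definition \ref{pdef} (extended by $q_s(m)=0$ at or below capacity). A sink $s$ gaining $p$ nodes is closed by $p$ penalty edges each charging $q_s(o_s(\mathcal{A})+1)$, so their total $p\,q_s(o_s(\mathcal{A})+1)\le\sum_{t=1}^{p}q_s(o_s(\mathcal{A})+t)$ is at most the true increase; a source $s$ losing $p$ nodes contributes $-p\,q_s(o_s(\mathcal{A}))\le-\sum_{t=0}^{p-1}q_s(o_s(\mathcal{A})-t)$, at most the (negated) true saving. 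Monotonicity makes both inequalities point the same way, so the aggregated penalty-edge cost is at most the true penalty difference, whence the total loop cost is at most $\mathrm{obj}(\mathcal{B})-\mathrm{obj}(\mathcal{A})<0$ and some loop must be negative; Proposition \ref{simplecyc} then lets me take it simple if desired. Verifying precisely these inequality directions, and confirming that the decomposition allots exactly $\mathrm{excess}(s)$ penalty edges to each source and sink, is the delicate part of the argument.
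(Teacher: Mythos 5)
Your proposal is correct and follows essentially the same route as the paper: the contrapositive via a difference multigraph on the service centers, decomposed into edge-disjoint cycles (via Proposition \ref{digraphdecomp}) and source-to-sink paths that are closed into loops by penalty edges of $\Gamma_{pen}(\mathcal{A})$. Your penalty accounting is in fact more careful than the paper's, which asserts that the total cost of the loops \emph{equals} $Cost(\mathcal{A}^{opt})-Cost(\mathcal{A})$ even though every penalty edge is priced at $\mathcal{A}$'s occupancies (not literally true when several loops share a source or sink); your monotonicity argument correctly replaces that equality with the inequality that the total loop cost is at most the cost difference, which is all the conclusion requires.
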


  \begin{proof}  
Since the removal of a negative cost loop reduces the total cost, it is obvious that an absence of negative cost loop is necessary for optimality. We shall now prove that an absence of negative cost loop is sufficient for optimality. 
 
Let $\mathcal{A}^{opt}$ be an optimal allotment, and  $\mathcal{A}$ be any allotment   that   does not induce  any negative cost cycles.  We shall show that if $Cost(\mathcal{A}^{opt})$ is less than $Cost(\mathcal{A})$, then $\mathcal{A}$ must induce a negative cost cycle. \\
We construct the {\em Difference Multigraph} for $\Gamma(\mathcal{A})$ and  $\Gamma(\mathcal{A}^{opt})$, denoted  $DM(\mathcal{A}^{opt},  \mathcal{A})$ as follows: \\
(1) For each service center $s_i, 1 \leq i \leq n_s+1$, add a vertex $v_i$ to $DM(\mathcal{A},  \mathcal{A}^{opt})$.\\
(2) Let $d_i, 1 \leq i \leq n_d,$ be any demand node such $\mathcal{A}$ assigns $d_i$ to $s_p$, $\mathcal{A}^{opt}$ assigns $d_i$ to $s_q$, and $p \neq q$. $DM(\mathcal{A},  \mathcal{A}^{opt})$ contains an edge, $(v_p, v_q, d_i)$
of weight $\mathcal{CM}(d_i, s_p) - \mathcal{CM}(d_i, s_q)$.

  $DM(\mathcal{A},  \mathcal{A}^{opt})$ represents the set of all demand node movements needed to transform allotment $\mathcal{A}$ into  $\mathcal{A}^{opt}$, i.e.,  all these edges must belong to $\Gamma(\mathcal{A})$.
  We break this up into two cases to simplify the presentation:\\
{\em Case (i) Every service center has the same occupancy in both  $\mathcal{A}$ and $\mathcal{A}^{opt}$. }

  All the service centers  have the same occupancies in both allotments, Thus, when we transform $\mathcal{A}$ into $\mathcal{A}^{opt})$ the number of demand nodes moved out of any particular service center is the same as the number of demand nodes moved into that service center. This means that  the out-degree of each node in   $DM(\mathcal{A},  \mathcal{A}^{opt})$ must be the same as its in-degree. By Proposition \ref{digraphdecomp}, this multigraph can   be partitioned into a set of edge-disjoint directed cycles.  Since $\mathcal{A}$ has greater cost than $\mathcal{A}^{opt}$, the sum of all the edge weights of $DM(\mathcal{A},  \mathcal{A}^{opt})$ must be less than zero, and therefore at least one of these cycles must have cost less than zero. All these edges must belong to   $\Gamma(\mathcal{A})$, i.e., $\mathcal{A}$ induces a negative cycle.
 
{\em Case (ii)  Every service center does not have the same occupancy in  $\mathcal{A}$ as in $\mathcal{A}^{opt}$. } In this case, it is possible that some nodes in $DM(\mathcal{A},  \mathcal{A}^{opt})$ have a greater in-degree, whereas others have a greater out-degree. 
Consider the following process: {\em Starting at any demand node $s_x$ which had a greater out-degree, perform a depth-first search. If the search revisits a vertex, we have a directed cycle which is removed. Otherwise, the search ends in a vertex $s_y$ which has no outgoing edges. All the edges in the depth-first search stack form a  path from $s_x$ to $s_y$; we can therefore add the penalty transfer edge from $s_y$ to $s_x$ to get a loop.}
The above process decomposes $DM(\mathcal{A},  \mathcal{A}^{opt})$ into a collection of loops. 
The removal of all these loops transforms $\mathcal{A}$ into  $\mathcal{A}^{opt}$, and therefore the total cost of all the loops is equal to the difference   $Cost(\mathcal{A}^{opt})- Cost(\mathcal{A})$. 
Since this quantity is less than zero, at least one of the loops must have negative cost.

\end{proof}

 \subsubsection{Procedure for Removing Negative Loops}
 \label{ncalgo}
 
  To compute the most  negative loop that {\em starts at} $s_j$  using Bellman-Ford, create a variant of the allotment subspace multigraph, called  the \emph{NegLoop} allotment subspace graph, as explained  below.

\begin{definition}
Given an allotment  $\mathcal{A}$, and a distinguished service center $s_j$, the Negloop allotment subspace graph, $NLoop(\mathcal{A}, s_j) = (P,Q)$, is defined as follows:\\ 
\textbf{ (1) Set of nodes} ($P$): Each service center $s_l \in S-{s_j}$ is a node in $P$. Service center $s_j$ (referred to as the {\em anchor} or {\em distinguished} service center) is represented by two nodes $s_j^{in}$ and $s_j^{out}$. \\
\textbf{ (2) Set of edges} ($Q$): Cost of the edges between any two nodes $u$ , $v$ $\in P$ is defined as follows:  \\
        (a) if $u \notin \{s_j^{in},s_j^{out}\}$ and $v \notin \{s_j^{in},s_j^{out}\}$, then $cost(u,v)$ is   the cost of $\Gamma_{min}(u, v)$. \\
        (b) if $u = s_j^{out}$, then $cost(u,v)$ is the cost of $\Gamma_{min}(s_j, v)$ . \\
        (c) if $v = s_j^{in}$, then $cost(u,v)$ is  the lesser of the costs of $\Gamma_{min}(u, v)$ and $\Gamma_{pen}(u, v)$.\\
        (d) There are no other edges in $Q$.
\label{minLoopGamma} 
\end{definition}

After the creating of the NegLoop allotment sub-space graph, we determine the lowest cost path between the "in" and "out" copies ($s_j^{in}$ and $s_j^{out}$) of the anchor service center. Note that since both $s_j^{in}$ and $s_j^{out}$ map to $s_j$ in $\Gamma(\mathcal{A})$, this path gives us the loop of least cost starting at $s_j$.  For computing the lowest cost path from $s_j^{out}$ to $s_j^{in}$, we use Bellman Ford's label correcting algorithm.  Note that we cannot use a label setting approach (e.g., Dijkstra's) for finding lowest cost path   as the graph may contain negative edges.  

\begin{proposition}
The shortest path from   $s_j^{out}$ to $s_j^{in}$ in $NLoop(\mathcal{A}, s_j)$ can be computed using Bellman-Ford, if $\mathcal{A}$ does not induce any negative cost loops that do not pass through $s_j$. Combining $s_j^{out}$ and $s_j^{in}$ into one vertex, this shortest path gives us the most negative loop starting at $s_j$ in the allotment $\mathcal{A}$. The entire process can be completed in $O(k^3 + k^2 \log n)$ steps.
\end{proposition}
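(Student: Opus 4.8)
The plan is to establish the three assertions of the proposition in turn: that Bellman-Ford is applicable to $NLoop(\mathcal{A},s_j)$, that the resulting shortest path is exactly the most negative loop at $s_j$, and that the whole computation fits in the stated bound. I would begin with applicability, by arguing that $NLoop(\mathcal{A},s_j)$ contains no negative-cost directed cycle, which is precisely the condition Bellman-Ford needs in order to return well-defined shortest-path labels (and its final relaxation pass also certifies this). The structural observation is that, by rules (b), (c) and (d) of Definition \ref{minLoopGamma}, the vertex $s_j^{out}$ has only outgoing edges and $s_j^{in}$ only incoming edges; hence no directed cycle of $NLoop$ can traverse either copy of the anchor. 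Every cycle therefore lies inside $S-\{s_j\}$ and is assembled exclusively from $\Gamma_{min}$ edges (rule (a)). Each such edge is a genuine cheapest transfer edge of $\Gamma(\mathcal{A})$, and along a simple cycle every center emits exactly one demand unit, so the cycle is realizable as a type-(i) loop of $\Gamma(\mathcal{A})$ that avoids $s_j$. A negative cycle in $NLoop$ would thus be a negative loop of $\mathcal{A}$ not passing through $s_j$, contradicting the hypothesis; so no such cycle exists and Bellman-Ford applies.

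Next I would set up the correspondence between $s_j^{out}\to s_j^{in}$ paths and loops rooted at $s_j$. Contracting $s_j^{out}$ and $s_j^{in}$ back to $s_j$ turns a path $s_j^{out}\to v_1\to\cdots\to v_m\to s_j^{in}$ into a loop $s_j\to v_1\to\cdots\to v_m\to s_j$. The opening edge (rule (b)) is the cheapest transfer out of $s_j$, the interior edges (rule (a)) are cheapest transfers, and the closing edge into $s_j^{in}$ (rule (c)) is the smaller of $\Gamma_{min}(v_m,s_j)$ and $\Gamma_{pen}(v_m,s_j)$ --- which is exactly the choice between closing the loop with a transfer edge (type (i), a cycle) or with a penalty-transfer edge (type (ii)) in the definition of a loop. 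To conclude that minimizing over $NLoop$ really finds the most negative loop, I would invoke Proposition \ref{simplecyc} to restrict attention to simple loops, and then note the ``$\Gamma_{min}$ suffices'' point: in a simple loop each center is the tail of at most one transfer edge, so replacing each arc by its minimum-cost representative can only lower the cost while keeping the loop realizable (the demand units moved out of distinct centers come from disjoint sets). Hence the cost of the shortest $s_j^{out}\to s_j^{in}$ path equals the cost of the most negative loop starting at $s_j$.

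Finally, for the running time I would count as follows: $NLoop$ has $k+1$ vertices and $O(k^2)$ edges. Each edge cost is obtained by reading the relevant $\Gamma_{min}$ (and, for rule (c), $\Gamma_{pen}$) value; over all $O(k^2)$ ordered pairs, fetching the $BestTransHeap$ tops together with the heap maintenance they entail costs $O(k^2\log n)$, since each heap can hold up to $n$ demand units. Running Bellman-Ford on this graph then takes $O(|V|\,|E|)=O(k\cdot k^2)=O(k^3)$ steps, and summing the two contributions gives the claimed $O(k^3+k^2\log n)$.

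I expect the main obstacle to be the correspondence argument of the second paragraph rather than the other two: one must check that restricting every arc to its $\Gamma_{min}$ edge loses no generality and yields a realizable loop, and that rule (c) correctly encodes both loop-closing mechanisms, including the penalty orientation $q_{v_m}(o_{v_m}+1)-q_{s_j}(o_{s_j})$, which corresponds to a net node removal from $s_j$ and a net addition at $v_m$. Combining Proposition \ref{simplecyc} with the disjointness of the per-center demand sets is precisely what makes this step go through cleanly.
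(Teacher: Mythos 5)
Your proof is correct and follows essentially the same route the paper takes (the paper states this proposition without a detailed proof, relying on the $NLoop$ construction, the preceding discussion of negative edges, and the $BestTransHeap$ machinery): you apply Bellman-Ford after ruling out negative cycles via the hypothesis, contract $s_j^{out}$/$s_j^{in}$ to recover the loop, and count $O(k^3)$ for Bellman-Ford plus $O(k^2\log n)$ for heap work. Your added details --- that cycles in $NLoop$ avoid both anchor copies, that simple loops with distinct tail centers move disjoint demand units so restricting to $\Gamma_{min}$ edges is lossless (via Proposition \ref{simplecyc}), and that rule (c) encodes both loop-closing mechanisms with the correct penalty orientation --- are exactly the justifications the paper leaves implicit.
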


We can modify our process  to obtain the loop of least cost that  {\em passes through} $s_j$, as follows:\\
\begin{enumerate}
    \item { \em Change the construction of the Negloop allotment graph.} The cost of each edge $(u, v)$ in $NLoop(\mathcal{A}, s_j)$ is defined as the lesser of the weights of $\Gamma_{min}(u, v)$ and $\Gamma_{pen}(u, v)$.
    \item { \em Extract the negative loop from the shortest path.} Once we find the shortest path in  $NLoop(\mathcal{A}, s_j)$ as defined above, we get a shortest path from $s_j^{out}$ to $s_j^{in}$ that may contain multiple edges from $\Gamma_{pen}()$. Let the first such edge on the shortest path  be from $s_x$ to $s_y$, and the last such edge on the shortest path  be from $s_z$ to $s_w$. Replace the sequence of edges from $s_x$ to $s_w$ by one edge $\Gamma_{pen}(x, w)$; the resulting loop, after merging $s_j^{in}$ and $s_j^{out}$ into one vertex, is the loop of least cost passing though $s_j$. 
\end{enumerate}
The above process can clearly be completed in $O(k^3 + k^2 \log n)$ steps. For correctness, 
note that replacing $\Gamma_{pen}(x, y)$ and $\Gamma_{pen}(z, w)$ by $\Gamma_{pen}(x, w)$ and $\Gamma_{pen}(z, y)$ does not increase the total cost. Since $\mathcal{A}$ did not have any negative loops  not passing through $s_j$, we have \\
Cost($\Gamma_{pen}(z, y)$) + Cost(edges from $s_y$ to $s_z$)  $ \geq 0$.\\
In other words,\\
Cost(edges from $s_j^{out}$ to $s_x$) + Cost($\Gamma_{pen}(x, w)$) + Cost(edges from $s_w$ to $s_j^{in}$) \\
must equal the cost of the shortest path found by Bellman-Ford. Therefore, we have:
\begin{proposition}
The   most negative passing through $s_j$ in the allotment $\mathcal{A}$ can be computed in $O(k^3 + k^2 \log n)$ steps.
\end{proposition}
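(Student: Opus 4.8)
The plan is to prove the claim in two halves, correctness and timing, and to piggy-back on the ``starting at $s_j$'' proposition for almost all of the work. For correctness I would first record the exact correspondence being exploited: by the definition of a loop, every valid loop through $s_j$ is a path from $s_j^{out}$ to $s_j^{in}$ in $NLoop(\mathcal{A},s_j)$ that uses at most one penalty transfer edge (the closing edge), and conversely every such path is a valid loop. Running Bellman--Ford on the graph in which each edge $(u,v)$ is given weight $\min(\mathrm{Cost}(\Gamma_{min}(u,v)),\mathrm{Cost}(\Gamma_{pen}(u,v)))$ returns the globally shortest $s_j^{out}$-to-$s_j^{in}$ path, whose cost $C^{*}$ is therefore a lower bound on the cost of \emph{every} valid loop through $s_j$. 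The catch is that this shortest path may use several penalty edges and so need not itself be a valid loop; the extraction step is exactly what repairs this.

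Second, I would show the extraction never increases cost and lands on a genuine loop. The two facts to assemble are already on the page: the re-pairing $\{\Gamma_{pen}(x,y),\Gamma_{pen}(z,w)\}\mapsto\{\Gamma_{pen}(x,w),\Gamma_{pen}(z,y)\}$ is cost-\emph{preserving} (both pairs sum to $q_x(o_x+1)+q_z(o_z+1)-q_y(o_y)-q_w(o_w)$), and the closed walk that runs from $s_y$ to $s_z$ and is shut by $\Gamma_{pen}(z,y)$ avoids $s_j$, so under the standing hypothesis that $\mathcal{A}$ induces no negative loop off $s_j$ its cost is at least $0$. Peeling this non-negative piece away leaves a path from $s_j^{out}$ to $s_j^{in}$ closed by the single edge $\Gamma_{pen}(x,w)$, i.e.\ a valid loop through $s_j$, of cost at most $C^{*}$. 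Since that loop is valid, its cost is also at least $C^{*}$, so it equals $C^{*}$ and is the most negative loop through $s_j$. I would also separately dismiss the degenerate case in which the Bellman--Ford path carries no penalty edge (it is then already a cycle-type loop and needs no surgery).

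For the running time I would bill each phase against the $O(k^3+k^2\log n)$ budget. Assembling $NLoop(\mathcal{A},s_j)$ needs one weight per ordered pair of service centers; the $\Gamma_{min}$ weight is read from the top of the matching $BestTransHeap$ and the $\Gamma_{pen}$ weight is an $O(1)$ expression in the current occupancies, and charging $O(\log n)$ to each of the $O(k^2)$ heap accesses (to cover popping of stale entries) gives $O(k^2\log n)$. Bellman--Ford over the $O(k)$ vertices and $O(k^2)$ edges costs $O(k^3)$, and the extraction only rescans a path of $O(k)$ edges. The sum is $O(k^3+k^2\log n)$, identical to the bound of the ``starting at $s_j$'' proposition, as claimed.

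The step I expect to be the real obstacle is the correctness of extraction when the Bellman--Ford path contains \emph{more than two} penalty edges, since then the portion from $s_y$ to $s_z$ is not itself a single-penalty loop and the $\geq 0$ bound cannot be invoked directly. I would handle this by decomposing that portion into simple single-penalty loops, all of which avoid $s_j$ and are therefore non-negative by hypothesis (appealing to Proposition \ref{simplecyc} to reduce to simple loops), and then iterating the cost-preserving re-pairing until exactly one penalty edge survives. Pinning down ``first'' and ``last'' penalty edge precisely, keeping the orientation of every peeled sub-walk consistent with $\Gamma(\mathcal{A})$, and confirming that the re-paired edges $\Gamma_{pen}(x,w)$ and $\Gamma_{pen}(z,y)$ are indeed present in the graph, is where the bulk of the care is needed.
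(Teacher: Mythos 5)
Your proposal takes essentially the same route as the paper: the same modified $NLoop$ graph with each edge weighted $\min\bigl(\mathrm{Cost}(\Gamma_{min}(u,v)),\mathrm{Cost}(\Gamma_{pen}(u,v))\bigr)$, Bellman--Ford from $s_j^{out}$ to $s_j^{in}$, and the cost-preserving re-pairing $\{\Gamma_{pen}(x,y),\Gamma_{pen}(z,w)\}\mapsto\{\Gamma_{pen}(x,w),\Gamma_{pen}(z,y)\}$ whose peeled-off piece is non-negative because $\mathcal{A}$ induces no negative loop avoiding $s_j$, with the identical $O(k^3+k^2\log n)$ accounting. Your explicit handling of the case of more than two penalty edges (recursive decomposition into single-penalty loops via iterated re-pairing) is, if anything, more careful than the paper, which states only the two-edge exchange and collapses the entire segment between the first and last penalty edges in a single step.
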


 \section{Maintaining an Optimal Allotment}
 \label{optsec}
 We maintain an optimal allotment by removing all negative loops every time there is a change to the problem instance. Following are some examples of the changes we can make:
 \begin{enumerate}
     \item {\em Adding a new demand node to some service center.}
     \item {\em Removing an existing demand node (also unassigning it from a service center) }
     \item {\em Making changes to the capacity or penalty function for a service center }
     \item {\em Making changes to the cost matrix}
 \end{enumerate}
 
 In each of these cases, the nature of the allotment changes when the problem instance changes. Let $\mathcal{P}1$ be the problem instance prior to the changes, and let   $\mathcal{A}1$ be an  optimal allotment for $\mathcal{P}1$.   Let $\mathcal{P}2$ be the problem instance after making the changes. 
 Changing a penalty function or the cost matrix does not change the allotment, but 
 adding or removing demand nodes results in an new allotment. Without loss of generality, say  that we have a problem instance $\mathcal{P}2$ with an allotment $\mathcal{A}2$. As a result, there are  changes to $\Gamma_{min}() $  and/or $\Gamma_{pen}()$.   These changes can result in the formation of negative loops; the allotment needs to be modified so that the resulting allotment $\mathcal{A}3$ is free of negative loops.

  We shall show now examine all these modifications and show that for some restricted kinds of changes, we can compute a new optimal allotment in $O(k^3 + k^2\log n)$ time.  In all these cases, it can be shown that removing the most negative loop induced by $\mathcal{A}2$, yields an allotment $\mathcal{A}3$ that is optimal for problem instance $\mathcal{P}2$.
  The requirements we place on the modification to problem instance $\mathcal{P}1$ are the following:
  
  \begin{enumerate}
    
      \item {\em All the modifications are centered around one service center $s_j$. }
      We refer to this as the distinguished service center. If we add or remove  a demand node or modify  a penalty function this is clearly met. If the cost matrix is modified, then these changes should all be connected to one service center. 
      \item {\em All affected edges lead into or lead out of the distinguished service center.} When we increment (decrement) capacity of a service center, only the outgoing (incoming, resp) penalty edges
      of that service center are affected. When we insert a demand node into $s_j$, this condition gets violated, since outgoing transfer edges and incoming penalty edges of $s_j$ are affected. We will discuss this in the next subsection.  
      \item {\em Changes are confined to one demand node. } If the change is such that more than one demand node is independently affected, then we do not have a way to recompute the optimal solution as claimed. We will elaborate on this later.

  \end{enumerate}

  \subsection{Adding a new demand node}
Say that we have an allotment $\mathcal{A}1$, and we add a  demand node,  $d_i$, to service center $s_j$, the structure of the resulting allotment, $\mathcal{A}2$,  is different in two ways:\\
{\em (i)  The weight of any edge leading out of $s_j$ in $\Gamma_{min}() $ may decrease. } This happens because we have a new demand node in $s_j$, which may be closer to some service center $s_l$, than any of the demand nodes currently in $s_j$.\\
{\em (ii) The weight of any edge leading into $s_j$ in $\Gamma_{pen}() $ may decrease. } This happens because we have a new demand node in $s_j$ which increases its occupancy and hence increases the penalty associated with $s_j$. 

If we have to maintain an allotment that is free of negative loops, we need  a procedure to remove the negative loops from $\mathcal{A}2$. As mentioned earlier,  this modification to the problem instance $\mathcal{P}1$  does not comply with the restrictions we placed. 
However, we can treat the addition of a demand node as two independent sequentially occurring changes: {\em first, change the weights of the outgoing edges from $s_j$ in $\Gamma_{min}(\mathcal{A}1)$, and remove any resulting negative loops; next increase the occupancy of $s_j$, update the weights in $\Gamma_{pen}()$, and remove any resulting negative loops.}

Splitting the insert operation in this manner requires some justification, since changing the weights of the outgoing edges from $s_j$ may not represent an allotment for any problem instance. To see why this is admissible,
consider the following modified definition for the problem:\\
{\em (1) Add a dedicated dummy demand node associated with each service center.} For each service center $s_i$, we add a dummy demand node $d_{n_d+i}$, such $\mathcal{CM}(n_d+i, i)$ equals zero, and  $\mathcal{CM}(n_d+i, j)$ equals $\infty$ for all $j \neq i$.\\
{\em (2) Modify the penalty function to accommodate the dedicated demand node at zero cost.} This is accomplished by increasing the capacity of all service centers by one, and shifting the penalty function one unit to the right of the $X$-axis.

With these modifications, we can now treat the  addition of a demand node as two-step process:\\
{\em Step 1. Add the demand node $d_i$ to service center $s_j$, and remove the demand node $d_{n_d+j}$  from  $s_j$.  } In this step, we potentially reduce the costs of some edges leading out of $s_j$ in $\Gamma_{min}() $, but do not change the occupancy of $s_j$\\
{\em Step 2. Add the demand node   $d_{n_d+j}$  to  $s_j$.  }  In this step, we increase the occupancy of$s_j$, and  potentially reduce the costs of some edges leading in to  $s_j$ in $\Gamma_{pen}() $\\

Note that if we remove the negative loops after Step 1, we get an optimal allotment for a valid problem instance, and hence the splitting of the insert operation is justified.  

\subsection{Restoring Optimality after Insertion}

We shall now prove two technical lemmas that establish the correctness of our insertion process. In Step 1 of the insertion process, we are potentially reducing the costs of some edges leading our of $s_j$ in $\Gamma_{min}() $. In Lemma \ref{mainlemma} we prove that by removing the most negative loop passing through $s_j$ we get an optimal allotment. In Step 2,  we potentially reduce the costs of some edges leading in to  $s_j$ in $\Gamma_{pen}() $. In Lemma \ref{mainlemma2} we prove that by removing the most negative loop starting in $s_j$ we get an optimal allotment.

 \begin{lemma}
Let $ \mathcal{A}1$ be an allotment   that has no negative cost loops. 
Say that we modify $\mathcal{A}1$ as follows: \\
(1)  identify a distinguished vertex $s_j$ and a demand node $d$ that is assigned to $s_j$ in $\mathcal{A}1$. \\
(2) Replace $d$ with a demand node $d_i$, with arbitrary values for  $\mathcal{CM}(d_i,s_x)$, for all $1 \leq x \leq n_s$. \\ 
Let $\mathcal{A}2$ be the resulting allotment, and
let $C_{min}$ be the loop of least cost induced by $\mathcal{A}2$ that passes through $s_j$. We have the following:\\
(a) If $C_{min}$ has cost greater than or equal to zero, then $ \mathcal{A}2$ does not induce any negative loops.\\
(b) Say that $C_{min}$ has cost less than zero, and that modifying $ \mathcal{A}2$ to remove $C_{min}$   results in the allotment  $ \mathcal{A}3$. $ \mathcal{A}3$  does not induce any  negative loop.
\label{mainlemma}
\end{lemma}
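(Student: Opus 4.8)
The plan is to derive both parts from a single structural observation about how $\mathcal{A}2$ differs from $\mathcal{A}1$. Since $\mathcal{A}2$ is obtained from the loop-free allotment $\mathcal{A}1$ by replacing one demand node $d$ at $s_j$ by $d_i$, leaving every occupancy unchanged, I would first record carefully which edges can move: all edges not incident to $s_j$, all penalty edges of $\Gamma_{pen}$ (occupancies are unchanged), and all transfer edges leading \emph{into} $s_j$ are identical in $\Gamma(\mathcal{A}1)$ and $\Gamma(\mathcal{A}2)$. Only the edges $\Gamma_{min}(j,\cdot)$ out of $s_j$ can change, and a strict \emph{decrease} of any such edge can only be caused by the single new node $d_i$ becoming the cheapest transfer. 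This ``all change is attributable to $d_i$'' fact is the lever for everything that follows.

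For part (a) I would argue the contrapositive. Suppose $\mathcal{A}2$ induces a negative loop $L$; by Proposition \ref{simplecyc} take $L$ simple. If $L$ avoids $s_j$, then all of its edges coincide with edges of $\Gamma(\mathcal{A}1)$ and $\Gamma_{pen}(\mathcal{A}1)$, so $Cost(L)$ equals its cost under $\mathcal{A}1$, which is $\geq 0$ since $\mathcal{A}1$ has no negative loops -- a contradiction. Hence every negative loop of $\mathcal{A}2$ passes through $s_j$, so the least-cost loop $C_{min}$ through $s_j$ obeys $Cost(C_{min}) \leq Cost(L) < 0$. Equivalently, $Cost(C_{min}) \geq 0$ forces $\mathcal{A}2$ to have no negative loop, which is exactly (a).

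For part (b) the strategy is to show that removing $C_{min}$ yields an allotment $\mathcal{A}3$ that is optimal for the modified instance $\mathcal{P}2$; by Theorem \ref{optimalityThm} this is equivalent to $\mathcal{A}3$ being free of negative loops. Let $\mathcal{A}^{opt}$ be optimal for $\mathcal{P}2$. Following the proof of Theorem \ref{optimalityThm}, I would build the difference multigraph $DM(\mathcal{A}2, \mathcal{A}^{opt})$ and decompose it into loops induced by $\mathcal{A}2$ whose total cost equals $Cost(\mathcal{A}^{opt}) - Cost(\mathcal{A}2)$. The main claim -- and the crux of the argument -- is that at most one loop of this decomposition can have negative cost. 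This is exactly where the single-node structure is used: any loop not using the (unique) $DM$ edge that carries $d_i$ out of $s_j$ uses only edges unchanged from $\mathcal{A}1$, hence is a loop induced by $\mathcal{A}1$ and has cost $\geq 0$; and $d_i$ labels at most one edge of $DM$, since $\mathcal{A}^{opt}$ places $d_i$ at a single service center. Denoting that loop $L_0$ (when it is negative), every other loop contributes a nonnegative amount, so $Cost(\mathcal{A}^{opt}) - Cost(\mathcal{A}2) = \sum_L Cost(L) \geq Cost(L_0) \geq Cost(C_{min})$, the last step because $L_0$ passes through $s_j$ and $C_{min}$ is the most negative such loop; if no negative $L_0$ exists the sum is $\geq 0 > Cost(C_{min})$. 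Rearranging gives $Cost(\mathcal{A}^{opt}) \geq Cost(\mathcal{A}2) + Cost(C_{min}) = Cost(\mathcal{A}3)$, so $\mathcal{A}3$ is optimal and hence has no negative loop.

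I expect the delicate points to be twofold. First, justifying rigorously that a decrease in any out-edge of $s_j$ is attributable solely to $d_i$, so that ``a loop not using $d_i$'s edge'' genuinely has the same cost as under $\mathcal{A}1$; this is what reduces a seemingly multi-edge perturbation to a single-node one. Second, confirming that the loop decomposition of Theorem \ref{optimalityThm} partitions the $DM$ edges so that $d_i$'s edge lies in exactly one loop -- this is what pins the number of negative loops at most one and thereby makes a single loop-cancellation sufficient to restore global optimality. A minor bookkeeping point to watch is the sign convention on the difference-multigraph edge weights, which must agree with the transfer-cost convention of $\Gamma(\mathcal{A})$ so that the loop costs sum to $Cost(\mathcal{A}^{opt}) - Cost(\mathcal{A}2)$.
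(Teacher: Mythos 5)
Your proposal is correct, but it takes a genuinely different route from the paper's own proof. The paper argues by contradiction against a hypothetical negative loop $C_{neg}$ in $\mathcal{A}3$: it forms the union $H = C_{min} \cup C_{neg}$, ``nullifies'' dependencies between transfer edges carrying the same demand node (and between interacting penalty edges), extracts a loop $C1$ containing the modified edge with $Cost(C1) \geq Cost(C_{min})$, and then shows via a four-case analysis on penalty edges (using monotonicity of the penalty functions) that the residual negative loop $C_{cont}$ must already be induced by $\mathcal{A}1$. You instead compare $\mathcal{A}2$ directly against an optimal allotment $\mathcal{A}^{opt}$ through the difference multigraph of Theorem \ref{optimalityThm}, and exploit the fact that $d_i$ labels exactly one $DM$ edge: every decomposition loop avoiding that edge uses only edges unchanged from $\mathcal{A}1$ (occupancies are identical, so penalty edges are too) and is hence nonnegative, which yields $Cost(\mathcal{A}^{opt}) - Cost(\mathcal{A}2) \geq Cost(L_0) \geq Cost(C_{min})$ and therefore optimality of $\mathcal{A}3$; loop-freeness follows from the easy direction of Theorem \ref{optimalityThm}. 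This is not circular, since that theorem is proved before and independently of Lemma \ref{mainlemma}. What your route buys: a much shorter argument that dispenses with the dependency-nullification machinery, a stronger intermediate conclusion (optimality of $\mathcal{A}3$, not merely loop-freeness), and a cleaner part (a) via the observation that any loop avoiding the $d_i$ edge --- even one through $s_j$ --- is an $\mathcal{A}1$ loop. What it costs: you inherit the glossed step in the paper's proof of Theorem \ref{optimalityThm}, namely that the decomposition's total cost, with penalty edges weighted at $\mathcal{A}2$'s occupancies, equals $Cost(\mathcal{A}^{opt}) - Cost(\mathcal{A}2)$ even when several loops touch the same service center's penalty function; the paper's lemma proof handles exactly these occupancy interactions explicitly in its cases (iii)--(iv), so a fully rigorous version of your argument would need a monotonicity argument that the stale-occupancy loop costs under-count the true difference only in the direction your inequality chain requires. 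Your flagged bookkeeping worry about signs is also apt: the paper's $DM$ construction states weight $\mathcal{CM}(d_i,s_p) - \mathcal{CM}(d_i,s_q)$, which is opposite to the transfer-cost convention of Definition \ref{Gammadef} (evidently a typo in the paper), so fixing the convention as you propose is necessary for the loop costs to sum to the cost difference.
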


Note that the only change caused by this modification, is that weights of edges leading out of $s_j$
in $\Gamma_{min}(\mathcal{A}2)$ may be lower than weights of corresponding edges in $\Gamma_{min}(\mathcal{A}1)$.
Before formalizing the proof details, it is useful to understand the  construction we employ. 
 The proof is by contradiction, i.e., we show that if  $ \mathcal{A}3$ has a negative loop (say $C_{neg}$) then there must have been a   a negative loop in  $ \mathcal{A}1$. We use the edges from $C_{min}$ and $C_{neg}$ to find a  loop, $C_{cont}$, in $\mathcal{A}1$ with total cost less than zero.  The challenge here is that $C_{min}$  consists of edges from  $\Gamma(\mathcal{A}2)$ and $\Gamma_{pen}(\mathcal{A}2)$, and $C_{neg}$ consists of edges from $\Gamma(\mathcal{A}3)$ and $\Gamma_{pen}(\mathcal{A}3)$. Since $C_{cont}$ can have edges only from $\Gamma(\mathcal{A}1)$ and $\Gamma_{pen}(\mathcal{A}1)$,    we need some additional processing.
 
{\bf  Nullifying edge dependencies.} \\
(a)  Two transfer edges, $e_1$   and $e_2$, are {\bf dependent} if they have the form $e_1 = (s_p, s_q, d_x)$ and $e_2 = (s_q, s_r, d_x)$, i.e., $e_1$  transfers  $d_x$ from $s_p$ to $s_q$ and $e_2$ transfers the same demand unit $d_x$ from $s_q$ to $s_r$. Note that a dependency can happen only when $e_1$   and $e_2$ belong to different allotment subspace multigraphs. (In our case $e_1$ is from $C_{min}$, i.e., $\mathcal{A}2$, and  $e_2$ is from $C_{neg}$, i.e., $\mathcal{A}3$.) This dependency is {\bf nullified} by removing $e_1$ and $e_2$, and adding the edge $e_3 = (s_p, s_r, d_x)$. Note that $e_3$ is from $\mathcal{A}2$, and the cost of $e_3$ is the sum of the costs of $e_1$ and $e_2$, i..e, {\em the total cost   remains unchanged when dependencies are nullified}.\\
 (b)   For dependencies between penalty edges, here are two situations to consider. The first situation occurs when  the loop $C_{min}$ has a penalty edge   $e_1(s_x, s_y)$, and $C_{neg}$ has a penalty edge   $e_2(s_y, s_z)$. 
     $ weight(e_{1}) = q_{x}(o_x(\mathcal{A}2) + 1) - q_{y}(o_y(\mathcal{A}2))$ \\
  $ weight(e_{2}) = q_{y}(o_y(\mathcal{A}3) + 1) - q_{z}(o_z(\mathcal{A}3))$ \\
  Going from $\mathcal{A}2$ to $\mathcal{A}3$ the occupancy of $s_y$ decreases by one, and the occupancy of $s_z$ remains unchanged. Therefore,  \\
   $ weight(e_{2}) = q_{y}(o_y(\mathcal{A}2)) - q_{z}(o_z(\mathcal{A}2))$ \\
   The sum of the weights of the edges is thus \\
   $ weight(e_{1}) + weight(e_{2}) = q_{x}(o_j(\mathcal{A}2) + 1) - q_{z}(o_z(\mathcal{A}2))$ \\
   which equals the weight of penalty edge  from $s_x$ to $s_z$. Thus we can replace $e_1 $ and  $e_2 $
 in  $C_{min} \bigcup C_{neg}$ by a single penalty edge  from $s_x$ to $s_z$. Alternately, if  $C_{min}$ has a penalty edge   $e_2(s_y, s_z)$, and $C_{neg}$ has a penalty edge   $e_1(s_x, s_y)$\\
  $ weight(e_{2}) = q_{y}(o_j(\mathcal{A}2) + 1) - q_{z}(o_y(\mathcal{A}2))$ \\ and \\
 $ weight(e_{1}) = q_{x}(o_x(\mathcal{A}3) + 1) - q_{y}(o_y(\mathcal{A}3))$ \\  
$ = q_{x}(o_x(\mathcal{A}2) + 1) - q_{y}(o_y(\mathcal{A}2 + 1))$. \\
 Once again, 
 $ weight(e_{1}) + weight(e_{2}) = q_{x}(o_x(\mathcal{A}2) + 1) - q_{z}(o_y(\mathcal{A}2)) $
 
  A second kind of dependency occurs when we have \\
  {\em (i)} $C_{min}$ with   $e_1(s_x, s_y)$, and $C_{neg}$ with   $e_2(s_x, s_z)$, or \\
  {\em (i)} $C_{min}$ with   $e_1(s_x, s_y)$, and $C_{neg}$ with   $e_2(s_z, s_y)$ \\
  In such a situation, the weight of edges can depend on the order in which we carry out the occupancy transfer. This is not nullified, but we will show that this order can be chosen appropriately in each case to achieve the needed end result.

 In our quest for $C_{cont}$,  we start with the edges in $C_{min}$ and $C_{neg}$ and nullify  the dependencies as explained above.
 Figure \ref{lemfig} shows an example of how this proof works. The edges $(S1, Sa)$ and $(S4, Sd)$ in $C_{neg}$ are dependent, respectively, on edges $(S^*, S1)$ and $(S3, S4)$ in $C_{min}$. In nullifying these dependencies, we  add the edges $(S^*, Sa)$ and $(S3, Sd)$, which  leaves us with two independent cycles. The distinguished edge $(S^*, S1)$ is replaced by the distinguished edge  $(S^*, Sa)$, and all the edges in the inner cycle $S1 \rightarrow S2 \rightarrow S3 \rightarrow Sd \rightarrow S1$ belong to $\Gamma(\mathcal{A}1)$. The outer cycle $S^* \rightarrow Sa \rightarrow Sb \rightarrow Sc \rightarrow S4 \rightarrow S5 \rightarrow S^*$ contains the   only distinguished edge, and all  its edges are present in $\Gamma(\mathcal{A}2)$. Therefore,  it must have cost greater than or equal to that of $C_{min}$.  Dependency nullification preserves the cost, and therefore the total cost of the inner cycle $S1 \rightarrow S2 \rightarrow S3 \rightarrow Sd \rightarrow S1$ has to be less than or equal to that of $C_{neg}$, i.e., it has to be negative. Thus the inner cycle is the $C_{cont}$ we are looking for.
\begin{figure}[ht] 
\includegraphics[width=80mm]{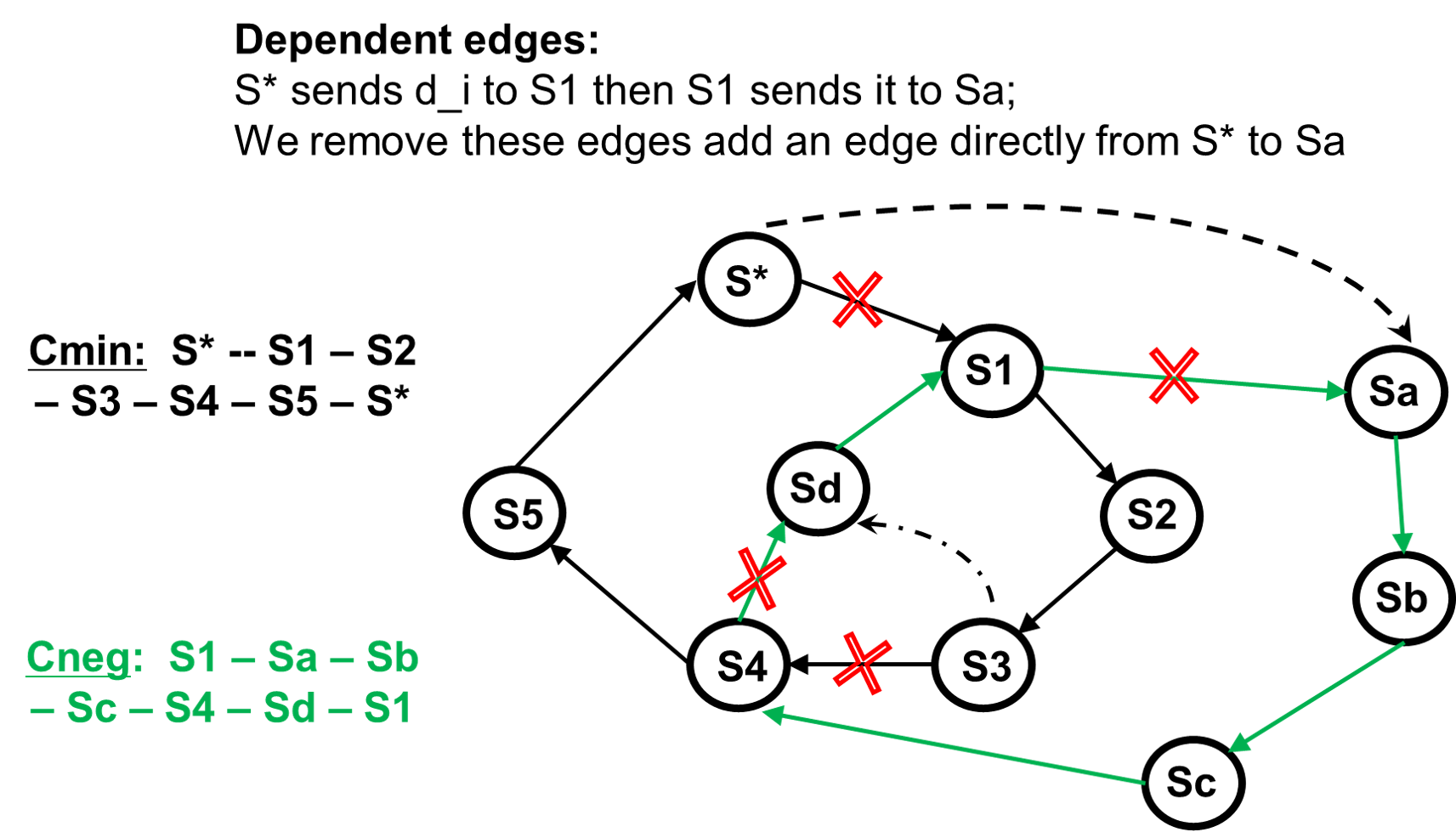}
\caption{Example illustrating   proof of Lemma \ref{mainlemma}.}
\label{lemfig}  
\vspace{-2mm}   
\end{figure}

\begin{proof} (of Lemma \ref{mainlemma}) 
 By Proposition \ref{simplecyc} the distinguished vertex appears only once in $C_{min}$. Hence $C_{min}$ contains exactly one   edge leading out of the distinguished vertex, and  this edge moves $d_i$ from $s_j$ to some service center $s_p$   All pairs of edges in $C_{min}$ are independent, as are all pairs of edges in $C_{neg}$. Consider the digraph $H(V,E)$, defined by $C_{min} \bigcup C_{neg}$. Since $H$ is a union of two cycles, it satisfies the property that $\forall v \in V, indegree(v) = outdegree(v)$. $H$ will have pairs  of edges, $(e_1 \in C_{min}, e_2 \in C_{neg})$, that are dependent. Let $H'(V,E')$ be the resulting digraph, after we nullify the dependencies in $H$. For any dependency, the vertex, $v_q$, corresponding to $s_q$, must belong to both cycles, i.e., $indegree(v_q) = outdegree (v_q) = 2$. After nullifying the dependency, we have $indegree(v_q)$ $= outdegree (v_q) = 1$.  Hence $H'$ also satisfies the property that $\forall v \in V, indegree(v) = outdegree(v)$, i.e. $H'$ admits a decomposition into directed cycles. The following claims can be established: 

{\bf Claim 1:} {\em All the demand node transfer edges in $H'$ belong to $\Gamma(\mathcal{A}2)$.} The only way this property can be violated, is when $C_{neg}$ contains an edge that depends on an edge in $C_{min}$. Since all dependencies are nullified, the claim holds for these edges.  

{\bf Claim 2:} {\em All the demand node transfer edges in $H'$ belong to $\Gamma(\mathcal{A}1)$, except for the modified edge.} The edges in $C_{min}$ are from  $\Gamma(\mathcal{A}1)$, except for the modified edge. The edges in $C_{neg}$ that are not in $\Gamma(\mathcal{A}1)$, are all dependent on edges in $C_{min}$, and are replaced by edges from $\Gamma(\mathcal{A}1)$ when we nullify dependencies.   Since no occupancy is modified when we go from   $ \mathcal{A}1$ to $ \mathcal{A}2$,
the penalty transfer edges can also be assumed to be from $ \mathcal{A}1$.

{\bf Claim 3:} {\em Total cost of all edges in $H'$ is less than the total cost of the edges in $C_{min}$.} The edges in $C_{neg}$ have total cost less than zero, and dependency nullification preserves the cost.

To complete the argument, find a loop, $C1$, in $H'$,   that passes through   $s_j$, and includes the   edge that moves $d_i$. This loop must must have cost greater than or equal to cost of $C_{min}$, since $C_{min}$ was the loop of least cost in  $\mathcal{A}2$. Therefore, $H' - C1$ has cost less than zero, i.e., decomposing it into directed cycles yields at least one loop, $C_{cont}$,  of negative cost. To complete the argument, we need to show that we had a negative loop in $\mathcal{A}1$. We have the following cases:

{\em case (i) All the  edges in $C_{cont}$ are demand node transfer edges.} Since the modified edge is in  $C1$, 
by Claim 2, all  the edges in $C_{cont}$ are from $\mathcal{A}1$, i.e., this negative loop existed in $\mathcal{A}1$.\\
{\em case (ii)  $C_{cont}$ contains exactly one penalty transfer edge.} Let $e_1(s_x, s_y)$ be the penalty transfer edge. Since $e_1$ is not dependent on any other edge in $H'$, the occupancies of $s_x$ and $s_y$   in $\mathcal{A}3$
as they were in $\mathcal{A}1$,   i.e, the negative loop $C_{cont}$ existed in $\mathcal{A}1$.\\
{\em case (iii)  $C_{cont}$ contains two independent penalty transfer edges.} This can happen if both $C_{min}$ and 
$C_{neg}$ have a penalty transfer edge. Let $e_1(s_x, s_y)$ and $e_2(s_z, s_w)$ be the two edges, such that  $C_{cont}$ passes through $s_x$, $s_y$, $s_z$  and $s_w$ in that order. Since $ s_x, s_y, s_z$  and $s_w$ have the same occupancy as in $\mathcal{A}1$, we have: \\ 
  $ weight(e_{1}) = q_{x}(o_x(\mathcal{A}1) + 1) - q_{y}(o_y(\mathcal{A}1))$ \\
  $ weight(e_{2}) = q_{z}(o_z(\mathcal{A}1) + 1) - q_{w}(o_w(\mathcal{A}1))$ \\
 If we replace $e_1$ and $e_2$ by the edges $e_1(s_x, s_w)$ and $e_2(s_z, s_y)$ with weights: 
 $ weight(e_{3}) = q_{x}(o_x(\mathcal{A}1) + 1) - q_{w}(o_w(\mathcal{A}1))$ \\
  $ weight(e_{2}) = q_{z}(o_z(\mathcal{A}1) + 1) - q_{y}(o_y(\mathcal{A}3))$ \\
  This gives us two loops, with same total weight as $C_{cont}$, which are from  $\mathcal{A}1$. Since $C_{cont}$
  has total weight less than zero, at least one of the two loops must have weight less than zero. \\
 {\em case (iv)  $C_{cont}$ contains a  dependent penalty transfer edge.} This can happen in one of two ways:\\
 (a) $C_{cont}$ has an edge $e_1(s_x, s_y)$ and  there is another edge $e_2(s_x, s_z)$ in $H'$. If $e_1$ is from 
 $\mathcal{A}2$, it has the same weight as it had in $\mathcal{A}1$, and our argument is complete. \\
 (b) If  $e_2$ is from 
 $\mathcal{A}2$, and $e_1$ is from  $\mathcal{A}3$, we have: \\
 $ weight(e_{2}) = q_{x}(o_x(\mathcal{A}1) + 1) - q_{z}(o_z(\mathcal{A}1))$. \\
 Since the occupancy of $s_x$ increases by one from $\mathcal{A}2$ to $\mathcal{A}3$  \\
 $ weight(e_{1}) = q_{x}(o_x(\mathcal{A}1) + 2) - q_{y}(o_y(\mathcal{A}1))$ \\
 However, since the penalty function is monotone increasing,  \\
 $ weight(e_{1}) in \mathcal{A}1  = q_{x}(o_x(\mathcal{A}1) + 1) - q_{y}(o_y(\mathcal{A}1)) \\ \leq  q_{x}(o_x(\mathcal{A}1) + 2) - q_{y}(o_y(\mathcal{A}1))$ \\
 Thus we have a negative loop in $\mathcal{A}1$.
\end{proof}

\begin{lemma}
Let $ \mathcal{A}1$ be an allotment   that has no negative cost loops. 
Say that we modify $\mathcal{A}1$  by
 identifying a distinguished vertex $s_j$ and modifying weights of edges in $\Gamma_{pen}(\mathcal{A}1)$,  that are directed to $s_j$ from all other nodes.
Let $ \mathcal{A}2$ be the resulting allotment, and let $C_{min}$ be the loop of least cost induced by $\mathcal{A}2$ that begins at $s_j$ and contains an   edge of modified weight. We have the following:\\
(a) If $C_{min}$ has cost greater than or equal to zero, then $ \mathcal{A}2$ induces no negative loops.\\
(b) Say that $C_{min}$ has cost less than zero, and that modifying $ \mathcal{A}2$ to remove $C_{min}$   results in the allotment  $ \mathcal{A}3$. $ \mathcal{A}3$  does not induce any  negative loop.
\label{mainlemma2}
\end{lemma}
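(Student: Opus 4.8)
The plan is to follow the proof of Lemma \ref{mainlemma} almost line for line, exploiting the duality between the two operations. The only structural change in passing from $\Gamma(\mathcal{A}1)$ to $\Gamma(\mathcal{A}2)$ is now that the penalty edges \emph{entering} $s_j$ are cheaper in $\mathcal{A}2$: the modification reflects an increase of the occupancy of $s_j$ by one, which raises $q_j(o_j)$ and hence lowers the weight $q_i(o_i+1)-q_j(o_j)$ of every edge $\Gamma_{pen}(i,j)$; all transfer edges, and all of $\Gamma_{min}$, are left unchanged. Accordingly, the role played by the modified transfer edge leaving $s_j$ in Lemma \ref{mainlemma} is now played by the modified \emph{penalty} edge entering $s_j$ that completes $C_{min}$. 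With this observation, part (a) is immediate: by Proposition \ref{simplecyc} it suffices to rule out simple negative loops, and any simple negative loop of $\mathcal{A}2$ that uses no modified edge is already a negative loop of $\mathcal{A}1$ (a contradiction), while any simple negative loop that uses a modified edge must, since a simple loop carries exactly one penalty edge, use precisely the completing penalty edge into $s_j$. Such a loop begins at $s_j$ and contains a modified edge, so its cost is at least $Cost(C_{min}) \geq 0$.

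For part (b) I would argue by contradiction exactly as before. Assuming the allotment $\mathcal{A}3$ obtained by removing $C_{min}$ induces a negative loop $C_{neg}$, I form the digraph $H = C_{min} \cup C_{neg}$, note that $indegree(v)=outdegree(v)$ holds at every vertex, and nullify the transfer--transfer and penalty--penalty dependencies exactly as defined before the proof of Lemma \ref{mainlemma}. The resulting $H'$ still satisfies the degree condition, so by Proposition \ref{digraphdecomp} it decomposes into loops. From this decomposition I extract the loop $C1$ through $s_j$ that carries the distinguished penalty edge (taking care, as in Lemma \ref{mainlemma}, that the distinguished edge, possibly after being merged by a penalty--penalty nullification, is the one placed in $C1$). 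Since $C_{min}$ is the cheapest loop beginning at $s_j$ that uses a modified edge, $Cost(C1) \geq Cost(C_{min})$, and therefore $H'-C1$ has negative total cost and splits into at least one negative loop $C_{cont}$ that is free of the distinguished edge. The goal is to realize $C_{cont}$ as a negative loop of $\mathcal{A}1$, contradicting the hypothesis that $\mathcal{A}1$ induces none.

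The re-expression of $C_{cont}$ at $\mathcal{A}1$ occupancies is where the real work lies, and this is the point at which the present lemma genuinely diverges from Lemma \ref{mainlemma}. Because $C_{min}$ here is always completed by a penalty edge, removing it necessarily perturbs \emph{two} occupancies rather than one: passing from $\mathcal{A}2$ to $\mathcal{A}3$, the occupancy of $s_j$ falls back to its $\mathcal{A}1$ value, while the occupancy of the far endpoint $s_k$ (the tail of the distinguished penalty edge) rises one above its $\mathcal{A}1$ value. Since the only penalty edge of $C_{min}$ is distinguished and lands in $C1$, every penalty edge of $C_{cont}$ is inherited from $C_{neg}$ and is evaluated at $\mathcal{A}3$ occupancies. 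I expect the main obstacle to be showing that each such edge, together with any penalty edges it is paired with, can be rewritten at $\mathcal{A}1$ occupancies without raising the total cost. I would dispatch this with the identical four-case analysis used at the close of Lemma \ref{mainlemma} --- all transfer edges; a single independent penalty edge; two independent penalty edges, which are re-paired so as to preserve total cost; and a dependent penalty edge, handled by the monotonicity of the penalty function (Definition \ref{pdef}) --- the new wrinkle being that the monotonicity bound must now be applied at both $s_j$ and $s_k$, and that the re-pairing must be arranged so that every penalty edge incident to $s_k$ is bounded in the favourable direction. Once $C_{cont}$ is shown to have $\mathcal{A}1$-cost no greater than its (negative) $\mathcal{A}3$-cost, it is a negative loop of $\mathcal{A}1$ and the contradiction is complete.
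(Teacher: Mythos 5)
Your proposal tracks the paper's own proof essentially step for step: the same contradiction via $H = C_{min} \cup C_{neg}$, the same dependency nullification, the extraction of a loop $C1$ through $s_j$ carrying the (possibly merged) distinguished penalty edge with $Cost(C1) \geq Cost(C_{min})$ justified by minimality over all modified candidate edges, and a final realization of the residual negative loop at $\mathcal{A}1$ occupancies --- the paper merely organizes that endgame as five explicit cases on the position of $C_{neg}$'s penalty edge relative to $s_p$ (your $s_k$) and $s_j$, rather than literally reusing Lemma \ref{mainlemma}'s four cases. One small correction to your closing summary: no monotonicity bound is needed at $s_j$ (removing $C_{min}$ restores $o_j$ exactly to its $\mathcal{A}1$ value, so weights of penalty edges into $s_j$ are equal, not merely bounded), and the genuinely unfavourable edges --- those of $C_{neg}$ entering $s_k$, where monotonicity goes the wrong way --- are disposed of not by any bound but precisely by the merge-with-the-distinguished-edge-plus-minimality device you already placed in the $C1$ extraction step.
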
 
\begin{proof}
This situation can happen when the occupancy of $s_j$ is increased.
Following our proof of Lemma \ref{mainlemma}, $C_{min}$ will end with a penalty transfer edge, $e_1$, from $s_p$ (say) to $s_j$.  If there is no penalty transfer edge in $C_{neg}$, we identify a loop, $C1$,  in $H'$ that contains $e_1$;  among the remaining edges, by way of contradiction, we find the negative cycle $C_{cont}$.   

If $C_{neg}$ has a penalty transfer edge $e_2$, we have the following cases: \\
\emph{case (i)  $e_2$ is not incident on either $s_p$ or $s_j$.} $e_2$ will have the same weight as it did in $\mathcal{A}1$, and hence any  $C_{cont}$ that  contains $e_2$ existed in $\mathcal{A}1$. \\
\emph{case (ii)  $e_2$ is directed $s_q$ to $s_p$.} The dependency is nullified by replacing $e_1$ and $e_2$ with a single edge, $e_3$, from $s_q$ to $s_j$. $C1$ is chosen as a loop that passes through  $e_3$. Note that $e_3$ was a candidate edge in the computing of $C_{min}$, and hence $C1$ has cost greater than or equal to that of $C_{min}$. Thus there should be a negative cycle in $H' - C1$ \\ 
\emph{case (iii)  $e_2$ is directed $s_j$ to $s_q$.} The dependency is nullified by replacing $e_1$ and $e_2$ with a single edge, $e_3$, from $s_p$ to $s_q$.
As a result, $H'$ does not contain any edge with modified weight. Any negative cycle in $H'$ serves as $C_{cont}$. \\
\emph{case (iv)  $e_2$ is directed $s_q$ to $s_j$.} The removal of $C_{min}$ decreases the occupancy of $s_j$ by one, and therefore $q_{j}(o_j(\mathcal{A}3))$ is the same as $q_{j}(o_j(\mathcal{A}1))$, i.e., any $C_{cont}$ containing $e_2$ was a negative loop under $\mathcal{A}1$. \\
\emph{case (v)  $e_2$ is directed $s_p$ to $s_q$.} The removal of $C_{min}$ increases the occupancy of $s_p$ by one, and therefore $q_{p}(o_p(\mathcal{A}3))$ is greater than $q_{p}(o_p(\mathcal{A}3))$, i.e., the the weight of $e_2$ in $\mathcal{A}1$ is less than the weight of $e_2$ in $\mathcal{A}3$. Hence any $C_{cont}$ containing $e_2$ was a negative loop under $\mathcal{A}1$.
 
\end{proof}

We are now ready to prove the following theorem:
\begin{theorem}
Consider the process of maintaining the allotment of $n$ demand nodes to $k$ service centers such that there are no negative cost loops. The addition of a demand node can be completed in time $O(k^3 + k^2\log n)$.
\end{theorem}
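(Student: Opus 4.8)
The plan is to separate correctness from the running-time accounting, since the hard correctness work has already been done. Correctness follows directly from the two-step reduction described just before the theorem together with the two preceding lemmas: Step~1 alters only the outgoing $\Gamma_{min}$ edges of the distinguished center $s_j$, so by Lemma~\ref{mainlemma} removing the single most negative loop \emph{passing through} $s_j$ restores an allotment free of negative loops; Step~2 alters only the incoming $\Gamma_{pen}$ edges of $s_j$, so by Lemma~\ref{mainlemma2} removing the single most negative loop \emph{beginning at} $s_j$ restores the invariant. The point I would stress is that each step requires exactly \emph{one} loop removal --- there is no iteration --- which is precisely what keeps the per-insertion cost at $O(k^3 + k^2\log n)$.

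For the time bound I would analyze the two steps in turn, each decomposed into (i) updating the affected edges, (ii) computing the relevant most negative loop, and (iii) removing that loop. In Step~1, updating the affected edges means inserting the new demand node into the $k-1$ heaps $BestTransHeap_{j,\ell}$ that govern transfers out of $s_j$ (and symmetrically handling the dummy node), costing $O(k\log n)$. Computing the most negative loop through $s_j$ costs $O(k^3 + k^2\log n)$ by the proposition of Section~\ref{ncalgo}. For the removal, Proposition~\ref{simplecyc} guarantees the loop is simple, hence it has at most $k$ transfer edges and moves at most $k$ demand units along it; each moved unit must be deleted from the $k-1$ heaps of its old center and inserted into the $k-1$ heaps of its new center, at $O(\log n)$ per heap operation, for a total of $O(k^2\log n)$.

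Step~2 is analyzed identically. Here updating the affected edges is even cheaper: incrementing the occupancy of $s_j$ changes only the $k-1$ incoming penalty edges, each recomputable from the penalty function in $O(1)$, for $O(k)$ total. Computing the most negative loop beginning at $s_j$ again costs $O(k^3 + k^2\log n)$ by the corresponding proposition, and removing it costs $O(k^2\log n)$ by the same heap-maintenance argument. Summing the two steps and observing that a constant number of $O(k^3 + k^2\log n)$ terms is still $O(k^3 + k^2\log n)$ yields the stated bound.

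The part I expect to require the most care is the complexity bookkeeping in the loop-removal phase rather than any new structural argument. Specifically, I would verify that a single demand-unit shift touches exactly the $O(k)$ heaps associated with its source and destination centers (and no others), so that the at most $k$ shifts of a simple loop contribute only $O(k^2\log n)$; and that rebuilding the NegLoop graph for Bellman--Ford merely reads the $O(k^2)$ heap tops without provoking further restructuring, so the $O(k^3)$ Bellman--Ford term is not inflated. With these two observations in place, the additive $O(k^3 + k^2\log n)$ cost of each step follows mechanically, and the theorem is complete.
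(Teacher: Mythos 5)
Your proposal is correct and follows essentially the same route as the paper's own proof: the same two-step insertion (transfer-edge update via Lemma~\ref{mainlemma}, then occupancy/penalty update via Lemma~\ref{mainlemma2}), heap updates in $O(k\log n)$, Bellman--Ford on the NegLoop graph in $O(k^3)$, and $O(k^2\log n)$ heap maintenance when removing the single loop per step. If anything, you are more explicit than the paper about why one loop removal suffices per step and about which variant of the most-negative-loop computation (passing through $s_j$ versus starting at $s_j$) each step needs, which is a welcome clarification rather than a deviation.
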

 
 \begin{proof}
 Let $\mathcal{A}1$ be the allotment prior to adding the new demand node. The addition is performed as a two-step process. For Step 1, we start by updating $k-1$ $BestTransHeaps$, one for each outgoing edge from $s_j$ in $\Gamma_{min}(\mathcal{A}1)$. Let $\mathcal{A}2$ be the resulting allotment.    Using this, we construct $NLoop(\mathcal{A}2, s_j)$, the entire process taking  $O( k\log n + k^2)$ operations. We then find the shortest path from  $s_j^{out}$ to $s_j^{in}$ using Bellman-Ford, which takes 
 $O(k^3)$ steps. If this path has cost greater than or equal to zero, we move to Step 2. Otherwise, this path is either of the following:\\ 
 {\em case (i). } a negative cycle in $\Gamma_{min}(\mathcal{A}2)$ \\
 {\em case (ii). } a path from $s_j$ to some $s_x$, along with the penalty transfer edge  from $s_x$ to $s_j$. \\
 Removing this loop requires updating the   $BestTransHeaps$ of all service centers in the loop, i.e., update at most $k^2$ heaps, which is done in $O(k^2\log n)$ time.
 Let $\mathcal{A}3$ be the resulting allotment after removing the loop. We then move to Step 2 and perform a similar process.
 Thus the entire operation is accomplished in $O(k^3 + k^2\log n)$ steps.
  \end{proof}
  
  Since an instance of LBDD can be solved by setting up the data structures and inserting $n$ demand nodes, we have the following theorem:
  \begin{theorem}
An optimal allotment for an LBDD instance with $n$ demand nodes and $k$ service centers can be   computed in time $O(nk^3 + nk^2\log n)$.
\end{theorem}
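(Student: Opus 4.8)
The plan is to construct the optimal allotment incrementally, inserting the $n$ demand nodes one at a time and charging each insertion to the preceding theorem on demand node addition. The correctness engine is Theorem \ref{optimalityThm}, which equates optimality with the absence of negative cost loops. I would carry an inductive invariant asserting that after $i$ insertions the current allotment $\mathcal{A}_i$ is free of negative loops, and is therefore optimal for the sub-instance consisting of the first $i$ demand nodes. The base case $i=0$ is immediate: with no demand nodes assigned, $\Gamma(\mathcal{A})$ contains no transfer edges, so there is no path and hence no loop, and the empty allotment is vacuously optimal.

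For the inductive step, suppose $\mathcal{A}_{i-1}$ is optimal (equivalently, has no negative loops) for the first $i-1$ demand nodes. To process $d_i$, I would attach it to a chosen distinguished service center $s_j$ (for example the one minimizing $\mathcal{CM}(d_i,s_j)$; the choice is immaterial to correctness) and run the two-step insertion procedure: first modify the outgoing $\Gamma_{min}$ edges of $s_j$ and remove the most negative loop passing through $s_j$, justified by Lemma \ref{mainlemma}; then raise the occupancy of $s_j$, update the incoming $\Gamma_{pen}$ edges, and remove the most negative loop starting at $s_j$, justified by Lemma \ref{mainlemma2}. Parts (a)--(b) of those lemmas guarantee that the resulting allotment $\mathcal{A}_i$ induces no negative loops, so Theorem \ref{optimalityThm} certifies it as optimal for the first $i$ demand nodes, closing the induction.

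For the running time, I would first account for the one-time initialization of the $2\binom{k}{2}$ (initially empty) \emph{BestTransHeap}s together with the $\Gamma_{min}$ and $\Gamma_{pen}$ bookkeeping, which costs $O(k^2)$ and is dominated by the insertion work. The preceding theorem bounds each of the $n$ insertions by $O(k^3 + k^2\log n)$, so summing over all $n$ insertions yields $O(nk^3 + nk^2\log n)$, matching the claimed bound.

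There is no single deep obstacle left, since the substantive arguments already live in Lemmas \ref{mainlemma} and \ref{mainlemma2}; the step most warranting care is verifying that the per-insertion guarantee composes correctly across the entire sequence. Concretely, I must confirm that the invariant handed to insertion $i$, namely that $\mathcal{A}_{i-1}$ induces no negative loops, is exactly the hypothesis those lemmas require, and that the dummy-demand-node reformulation used to split each insertion into its $\Gamma_{min}$ and $\Gamma_{pen}$ phases leaves a genuine optimal allotment for a valid sub-instance after each phase, so that the heaps and occupancies are in a consistent state before the next demand node arrives. I would also double-check that the $\log n$ factor is tight: since a single service center may hold $\Theta(n)$ demand nodes, each \emph{BestTransHeap} can contain $\Theta(n)$ entries, so heap updates cost $O(\log n)$ and the $k^2\log n$ term is justified.
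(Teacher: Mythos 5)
Your proposal is correct and follows essentially the same route as the paper: the paper also derives this theorem by initializing the data structures and performing $n$ insertions, each charged $O(k^3 + k^2\log n)$ by the preceding theorem, with optimality after every insertion certified through the no-negative-loop characterization (Theorem~\ref{optimalityThm}) and Lemmas~\ref{mainlemma} and~\ref{mainlemma2}. Your explicit induction, the vacuous base case, and the check that each heap holds $\Theta(n)$ entries (justifying the $\log n$ factor) are details the paper leaves implicit but are exactly its intended argument.
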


 \subsection{Removing a demand node}
 Say that we have an allotment $\mathcal{A}1$, and we remove a  demand node,  $d_i$, from service center $s_j$, the   resulting allotment, $\mathcal{A}2$,  modifies $\mathcal{A}1$ in two ways:\\
{\em (i)  The weight of any edge leading out of $s_j$ in $\Gamma_{min}() $ may increase. } This happens because $d_i$ was the best  node in $s_j$ for moving to some service center $s_l$. An increase in edge weights cannot induce any negative loops that were not already induced by  $\mathcal{A}1$, we can ignore these modifications.\\
{\em (ii) The weight of any edge leading out of $s_j$ in $\Gamma_{pen}() $ may decrease. } This happens because we have lower occupancy in $s_j$ which decreases the penalty associated with $s_j$. 
 
 The next lemma is anti-symmetrical to Lemma \ref{mainlemma2}, i.e, the occupancy of $s_j$ is decreased. As a result the penalty transfer edges leading out of $s_j$ decrease in weight. As can be expected, we get a similar set of cases and the details can be skipped. 

\begin{lemma}
Let $ \mathcal{A}1$ be an allotment   that has no negative cost loops. 
Say that we modify $\mathcal{A}1$  by
 identifying a distinguished vertex $s_j$ and modifying weights of edges in $\Gamma_{pen}(\mathcal{A}1)$,  that are directed from $s_j$ to all other nodes.
Let $ \mathcal{A}2$ be the resulting allotment, and let $C_{min}$ be the loop of least cost induced by $\mathcal{A}2$ that begins at $s_j$ and contains an   edge of modified weight. We have the following:\\
(a) If $C_{min}$ has cost greater than or equal to zero, then $ \mathcal{A}2$ induces no negative loops.\\
(b) Say that $C_{min}$ has cost less than zero, and that modifying $ \mathcal{A}2$ to remove $C_{min}$   results in the allotment  $ \mathcal{A}3$. $ \mathcal{A}3$  does not induce any  negative loop.
\label{mainlemma3}
\end{lemma}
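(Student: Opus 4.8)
The plan is to exploit the anti-symmetry with Lemma~\ref{mainlemma2} that the text already points to: there the occupancy of $s_j$ was \emph{increased} and the affected penalty edges were those directed \emph{into} $s_j$, whereas here $o_j$ is \emph{decreased} and the affected edges are the penalty edges directed \emph{out of} $s_j$. By Proposition~\ref{simplecyc} I may assume $C_{min}$ is simple, so it contains exactly one modified edge, namely a penalty transfer edge $e_1$ from $s_j$ to some $s_p$; traversing $C_{min}$ starting with $e_1$ is what it means for it to ``begin at'' $s_j$. Since passing from $\mathcal{A}1$ to $\mathcal{A}2$ only lowers $o_j$ and $q$ is monotone, the only weights that change are those of the penalty edges out of $s_j$ (which decrease), while penalty edges into $s_j$ can only increase. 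Removing $C_{min}$ to form $\mathcal{A}3$ then raises $o_j$ by one, restoring its $\mathcal{A}1$ value, and lowers $o_p$ by one; these two occupancy facts, together with monotonicity of $q$, are what will drive the case analysis.

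For part (a) I would argue directly. Let $L$ be any simple loop induced by $\mathcal{A}2$. If $L$ carries a modified edge, then it passes through $s_j$, so $\mathrm{cost}(L)\ge \mathrm{cost}(C_{min})\ge 0$ by the definition of $C_{min}$. If $L$ carries no modified edge, then every edge of $L$ has weight at least as large as under $\mathcal{A}1$ (penalty edges into $s_j$ can only have grown, all other weights are unchanged), and $L$ is also a loop of $\mathcal{A}1$, so $\mathrm{cost}(L)\ge \mathrm{cost}_{\mathcal{A}1}(L)\ge 0$. Hence $\mathcal{A}2$ induces no negative loop.

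For part (b) I would reuse the construction from the proofs of Lemmas~\ref{mainlemma} and \ref{mainlemma2} almost verbatim. Assuming toward a contradiction that $\mathcal{A}3$ induces a negative loop $C_{neg}$, I form $H=C_{min}\cup C_{neg}$, nullify the transfer- and penalty-edge dependencies (a cost-preserving operation established earlier) to obtain $H'$, and invoke Proposition~\ref{digraphdecomp} to decompose $H'$ into directed cycles. I then pull out the loop $C1$ of $H'$ that passes through $s_j$ and carries the (possibly merged) modified edge; because $C_{min}$ is the least-cost loop through $s_j$ with a modified edge in $\mathcal{A}2$, we get $\mathrm{cost}(C1)\ge \mathrm{cost}(C_{min})$, so $H'-C1$ has negative total cost and contains a negative loop $C_{cont}$. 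It remains to certify that $C_{cont}$ is already a negative loop of $\mathcal{A}1$, which I would do through the five mirror cases of Lemma~\ref{mainlemma2}, now keyed to $e_1=(s_j,s_p)$ and the single penalty edge $e_2$ that $C_{neg}$ may carry: if $e_2$ avoids both $s_j$ and $s_p$ it keeps its $\mathcal{A}1$ weight; if $e_2$ runs $s_p\to s_q$ it nullifies with $e_1$ into a penalty edge $(s_j,s_q)$ that was itself a candidate in computing $C_{min}$, forcing it into $C1$; if $e_2$ runs $s_q\to s_j$ it nullifies with $e_1$ into the edge $(s_q,s_p)$, leaving $H'$ with no modified edge; if $e_2$ runs $s_j\to s_q$ then, since removal of $C_{min}$ restores $o_j$ to its $\mathcal{A}1$ value, $e_2$ carries its $\mathcal{A}1$ weight; and if $e_2$ runs $s_q\to s_p$ then, since removal of $C_{min}$ lowers $o_p$, monotonicity of $q$ makes $e_2$'s $\mathcal{A}1$ weight no larger than its $\mathcal{A}3$ weight. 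In every case the transfer edges of $C_{cont}$ are from $\mathcal{A}1$ (the analog of Claims~1 and 2 of Lemma~\ref{mainlemma}), and $C_{cont}$ has $\mathcal{A}1$ cost at most its $\mathcal{A}3$ cost, hence is a negative loop of $\mathcal{A}1$ — the desired contradiction.

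The main obstacle I anticipate is precisely this final bookkeeping, not the high-level structure. One has to keep the occupancies of $s_j$ and $s_p$ across the three allotments $\mathcal{A}1,\mathcal{A}2,\mathcal{A}3$ perfectly straight and verify that in each case the sign reversal (a decrease rather than an increase of $o_j$) makes monotonicity of $q$ push the weights in exactly the direction that certifies a negative loop in $\mathcal{A}1$. Everything else — the dependency-nullification identities for penalty edges, cost preservation under nullification, and the cycle decomposition — transfers unchanged from Lemmas~\ref{mainlemma} and \ref{mainlemma2}, so I expect those parts to be one or two lines each and the case analysis to be the only place demanding genuine care.
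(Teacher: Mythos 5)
Your proposal is correct and takes exactly the route the paper intends: the paper deliberately omits the proof of Lemma~\ref{mainlemma3}, stating only that it is anti-symmetric to Lemma~\ref{mainlemma2} with ``a similar set of cases,'' and you supply precisely those mirrored details --- the direct argument for part~(a), the $C_{min}\cup C_{neg}$ dependency-nullification construction for part~(b), and the five cases keyed to $e_1=(s_j,s_p)$ --- with the occupancy bookkeeping ($o_j(\mathcal{A}3)=o_j(\mathcal{A}1)$, $o_p(\mathcal{A}3)=o_p(\mathcal{A}1)-1$) and the monotonicity sign-checks all pointing the right way. No gaps; nothing further is needed.
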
 

  \subsection{Changing the Capacity or Penalty for a Service Center}
  Using the above methods, the capacity of   service center $s_j$, or the penalty function associated with $s_j$  can be modified. In this case the requirement for confining the changes to one demand node, is met by restricting the 
  capacity increase (or decrease) to one demand unit.   
  If we are modifying the penalty function, the modification is restricted to shifting the function  by one unit along the 
 X-axis. In case of an increase in capacity or a right shift of the penalty function, the cost of some edges leading out of $s_j$ in $\Gamma_{pen}()$ is decreased. This corresponds to the possibility that the cost of the allotment can be lowered by moving a demand node  into the distinguished service center. The optimality of the resulting solution follows from Lemmma \ref{mainlemma3}. In case of a capacity decrement or a left shift of the penalty function, the resulting case is identical to what we see in \ref{mainlemma2}.

\section{Conclusion}
\label{con}
Load Balanced Demand Distribution (LBDD) is a societally important problem. LBDD problem can be reduced to min-cost bipartite matching problem. Here we have proposed a new approach that reduces the running time from 
$O(n^3k)$ to $O(n(k^3+k^2 \log n ))$

\bibliographystyle{abbrv}
\bibliography{llncs_lbnvd}

\begin{thebibliography}{10}

\bibitem{pdiag}
F.~Aurenhammer.
\newblock Power diagrams: Properties, algorithms and applications.
\newblock {\em SIAM Journal on Computing}, 16(1):78--96, 1987.

\bibitem{NET20477}
E.~Bortnikov, S.~Khuller, J.~Li, Y.~Mansour, and J.~S. Naor.
\newblock The load-distance balancing problem.
\newblock {\em Networks}, 59(1):22--29, 2012.

\bibitem{Demiryurek2012}
U.~Demiryurek and C.~Shahabi.
\newblock Indexing network voronoi diagrams.
\newblock In {\em 17th Intl. Conf. on Database Systems for Advanced
  Applications}, pages 526--543. Springer, 2012.

\bibitem{Diabat16}
A.~Diabat.
\newblock A capacitated facility location and inventory management problem with
  single sourcing.
\newblock {\em Optimization Letters}, 10(7):1577--1592, 2016.

\bibitem{MelkoteD01}
S.~Melkote and M.~S. Daskin.
\newblock Capacitated facility location/network design problems.
\newblock {\em European Journal of Operational Research}, 129(3):481--495,
  2001.

\bibitem{okabe}
A.~Okabe and et~al.
\newblock Generalized network voronoi diagrams: Concepts, computational
  methods, and applications.
\newblock {\em Intl. Journal of GIS}, 22(9):965--994, 2008.

\bibitem{Catchment_area}
E.~Parsons and et~al.
\newblock School catchments and pupil movements: A case study in parental
  choice.
\newblock {\em Educational Studies}, 26(1):33--48, 2000.

\bibitem{tarjan2012}
L.~{Ramshaw} and R.~E. {Tarjan}.
\newblock A weight-scaling algorithm for min-cost imperfect matchings in
  bipartite graphs.
\newblock In {\em Proc. of the 53rd FOCS}, pages 581--590, 2012.

\bibitem{U2008}
L.~H. U and e.~al.
\newblock Capacity constrained assignment in spatial databases.
\newblock In {\em Proc. of the Intl. Conf. on Management of Data (SIGMOD)},
  pages 15--28, 2008.

\bibitem{U2010}
L.~H. U and e.~al.
\newblock Optimal matching between spatial datasets under capacity constraints.
\newblock {\em ACM Trans. Database Syst.}, 35(2):9:1--9:44, 2010.

\bibitem{Xiao2011}
X.~Xiao, B.~Yao, and F.~Li.
\newblock Optimal location queries in road network databases.
\newblock In {\em Proc. of the 27th Intl. Conf. on Data Engineering (ICDE)},
  pages 804--815, 2011.

\bibitem{Yang2013}
K.~Yang and et~al.
\newblock Capacity-constrained network-voronoi diagram: A summary of results.
\newblock In {\em proc. of SSTD 2013}, pages 56--73.

\bibitem{7123646}
K.~Yang and et~al.
\newblock Capacity-constrained network-voronoi diagram.
\newblock {\em IEEE Trans. on Knowledge and Data Engineering},
  27(11):2919--2932, Nov 2015.

\bibitem{Yao2014}
B.~Yao and et~al.
\newblock Dynamic monitoring of optimal locations in road network databases.
\newblock {\em The VLDB Journal}, 23(5):697--720, 2014.

\end{thebibliography}

\appendix

\end{document}